\documentclass[12pt]{article}

\usepackage{amsmath,amssymb,amsfonts,amsthm}
\usepackage{algorithm}
\usepackage{algorithmic}
\usepackage{array}
\usepackage{booktabs}
\usepackage{multirow}
\usepackage{graphicx}
\usepackage{url}
\usepackage{xcolor}
\usepackage{float}
\usepackage{placeins}
\usepackage{geometry}
\usepackage{hyperref}
\usepackage{adjustbox}

\newtheorem{theorem}{Theorem}
\newtheorem{lemma}{Lemma}

\newtheorem{definition}{Definition}

\newtheorem{example}{Example}

\newcommand{\Z}{\mathbb{Z}}

\newcommand{\ecc}{\operatorname{ecc}}
\newcommand{\sgn}{\operatorname{sgn}}

\newcommand{\First}{\operatorname{First}}

\title{Constant-Time Certificate Selection for Local Broadcast Repair in Dense Gaussian and Eisenstein--Jacobi Networks}

\author{Bader A. Albader\\
\small Department of Computer Science, Faculty of Science, Kuwait University, Kuwait\\
\small \texttt{albader@cs.ku.edu.kw}}

\date{}

\begin{document}
\maketitle

\begin{abstract}
Dense Gaussian and Eisenstein--Jacobi (EJ) networks are algebraic interconnection networks with compact coordinate balls, fixed degree, and simple modular addressing. A source-centered coordinate-reduction tree gives a non-redundant one-to-all broadcast in the fault-free network, but processor faults can split the tree into multiple healthy components. Search-based local repair reconnects those components after scanning the network. Unlike search-based repair methods that require a linear scan of the network to select the repair plan, the certificate selectors introduced here operate in $O(1)$ time and $O(1)$ memory, consulting only the fault coordinates. This paper develops this stronger formulation for the one- and two-fault regime: a constant-time certificate selector. Given only the faulty coordinates, the selector classifies the relative fault geometry, chooses a coordinate-reduction orientation, and returns a bounded ordered set of component-crossing repair edges. For dense Gaussian networks $G_k$, every source-free fault set with $|F|\le2$ is repaired with depth at most $k+2$ and with exactly $c-1$ external component-crossing edges for the selected fault-pruned orientation. For dense EJ networks $H_t$, every one-fault placement is repaired within depth $t+1$, and every two-fault placement is repaired within depth $t+2$, again with exactly $c-1$ external repair edges. The repair-plan decision uses explicit constant-size case tables, algebraic quotient-neighbor tests, and constant-size local edge lists; materializing the full repaired parent map remains $\Theta(N)$, which is unavoidable if all parent assignments are required. Exhaustive strict validation confirms the Gaussian selector over $146{,}156$ one- and two-fault cases for $k=5,\ldots,12$ and the EJ selector over $52{,}395$ cases for $t=2,\ldots,8$, with zero failures in connectivity, acyclicity, exact repair count, or depth bound.
\end{abstract}

\noindent\textbf{Keywords:} Gaussian networks, Eisenstein--Jacobi networks, Cayley graphs, fault-tolerant broadcasting, non-redundant communication, local repair, constant-time algorithms, component repair, interconnection networks.

\section{Introduction}

One-to-all broadcasting is a basic collective operation in parallel and distributed systems. In a non-redundant broadcast, each healthy processor receives the message exactly once. Graph-theoretically, the communication structure is a rooted spanning tree of the healthy subgraph. This clean condition is attractive because it avoids duplicate traffic and gives an immediate parent-uniqueness invariant. The difficulty is that a fault-free broadcast tree can be fragmented by failed processors.

This paper studies local post-fault repair in two algebraic network families: dense Gaussian networks and dense Eisenstein--Jacobi (EJ) networks. The dense Gaussian family has degree four and a Manhattan coordinate ball. The dense EJ family has degree six and a hexagonal axial-coordinate ball. In both families, a coordinate-reduction broadcast tree reaches all vertices in at most the network diameter in the fault-free case.

A global rebuild can restore reachability after faults, but it may replace many healthy parent relations. The local repair objective is different: preserve the healthy components of the damaged broadcast tree and add only external component-crossing repair edges. If deletion of the faults splits the selected orientation tree into $c$ healthy components, at least $c-1$ external component-crossing edges are necessary. A repair is edge-minimum for the selected orientation when it attains this lower bound.

Previous multi-orientation repair mechanisms can find such repairs by evaluating a constant number of orientations, computing components of the fault-pruned tree, scanning for crossing edges, and selecting a shallow component-level reconnection. This is linear in the network order because the tree and component graph are explicitly built. The present paper asks a sharper question: in the first nontrivial fault regime, $|F|\le2$, can the repair plan itself be selected directly from the fault coordinates?

The answer is yes for both network families. We give constant-time certificate selectors for dense Gaussian and dense EJ networks. The selector does not enumerate all vertices to decide the repair plan. It classifies the fault geometry, chooses one coordinate-reduction orientation, and produces an ordered list of repair edges from a constant-size formula table. Validation may reconstruct the full tree to check the invariant, and an implementation may still materialize the full tree when a complete parent map is required. That output task is necessarily $\Omega(N)$; the constant-time claim is about the local repair-plan decision.

The contributions are as follows.
\begin{itemize}
    \item We introduce a unified constant-time certificate-selection framework for local repair of diameter-level broadcast trees in dense Gaussian and dense EJ networks.
    \item For Gaussian networks $G_k$, we prove that every one- or two-fault placement has a selected repair with depth at most $k+2$ and exact external repair count $c-1$. The hard orthogonal-axis case is handled by a five-component certificate.
    \item For dense EJ networks $H_t$, we prove that every one-fault placement has depth at most $t+1$ and every two-fault placement has depth at most $t+2$, with exact external repair count $c-1$.
    \item We separate plan selection from tree materialization: the selector is $O(1)$ time and $O(1)$ memory, while writing all parent assignments remains $\Theta(N)$.
    \item We provide strict exhaustive validation. The Gaussian run covers $146{,}156$ cases for $k=5,\ldots,12$; the EJ run covers $52{,}395$ cases for $t=2,\ldots,8$. Both runs have zero failures.
\end{itemize}

The proof style is mathematical and certificate-based. Examples are included only to clarify the coordinate arithmetic and are not used as proof substitutes.

\section{Related Work}

Group-theoretic models for symmetric interconnection networks provide a general language for Cayley-graph topologies \cite{AkersKrishnamurthy1989}. Standard references cover meshes, tori, hypercubes, and collective communication algorithms in parallel systems \cite{Leighton1992,DallyTowles2004}. Fault-tolerant communication in toroidal networks has been studied under non-redundant communication objectives \cite{AlMohammadBose1999}. Fault-tolerant routing and broadcasting in hypercube-like networks also study local resilience and alternate communication paths \cite{LeeHayes1988,LiuSong2005}.

Gaussian and EJ networks are algebraic low-degree interconnection topologies with compact routing descriptions. Gaussian integer models for toroidal networks and dense Gaussian networks were studied in \cite{MartinezDenseGaussian2006,MartinezGaussian2008}. Hexagonal and EJ constellation models were studied in \cite{MartinezEJ2008}. The topology of Gaussian and EJ interconnection networks was analyzed in \cite{FlahiveBoseTopology2010}, and resource placement in these networks was classified in \cite{FlahiveBosePlacement2013}. Hamiltonian and path structures in related algebraic networks appear in \cite{AlbaderBoseHamiltonian2016,AlbaderHexMesh2012,Alsaleh2013}.

The present work differs from routing, Hamiltonian-cycle construction, resource placement, and precomputed independent-tree methods. It repairs a particular damaged broadcast tree after the fault set is known, preserves the healthy fault-pruned components, and minimizes the number of new component-crossing edges. The main novelty here is that, for one and two faults in both dense Gaussian and dense EJ networks, the repair plan can be selected by constant-size coordinate certificates rather than by a full component scan.

\section{Network Models and Component Repair}

\subsection{Dense Gaussian Networks}

Let $G_k$ be the dense Gaussian network generated by $\alpha=k+(k+1)i$. Its order is
\begin{equation}
    N_G=k^2+(k+1)^2=2k^2+2k+1.
\end{equation}
The canonical coordinate set is the Manhattan ball
\begin{equation}
    B_k=\{(x,y)\in\Z^2: |x|+|y|\le k\}.
\end{equation}
The coordinate-to-label map is
\begin{equation}
    \phi_G(x,y)\equiv kx+(k+1)y \pmod {N_G}.
    \label{eq:phiG}
\end{equation}
Thus unit changes in the first and second coordinates correspond to label differences $k$ and $k+1$ modulo $N_G$. The graph distance from the source in the canonical ball is
\begin{equation}
    \rho_G(x,y)=|x|+|y|.
\end{equation}
A Gaussian coordinate-reduction tree assigns each non-source coordinate a parent of layer one less.

\subsection{Dense Eisenstein--Jacobi Networks}

Let $\omega=(-1+i\sqrt{3})/2$. A dense EJ network $H_t$ is generated by
\begin{equation}
    \alpha=(t+1)+t\omega
\end{equation}
with order
\begin{equation}
    N_E=3t^2+3t+1.
\end{equation}
We use axial coordinates $(x,y)$ for $x+y\omega$. The six unit directions are
\begin{equation}
\begin{split}
    \delta_0&=(1,0),\quad \delta_1=(1,-1),\quad \delta_2=(0,-1),\\
    \delta_3&=(-1,0),\quad \delta_4=(-1,1),\quad \delta_5=(0,1),
\end{split}
\end{equation}
listed in cyclic order. The EJ layer is
\begin{equation}
    \rho_E(x,y)=\max\{|x|,|y|,|x+y|\},
\end{equation}
with canonical hexagon
\begin{equation}
    H_t=\{(x,y): |x|\le t, |y|\le t, |x+y|\le t\}.
\end{equation}
The coordinate-to-label map is
\begin{equation}
    \phi_E(x,y)\equiv tx+(2t+1)y \pmod {N_E}.
    \label{eq:phiE}
\end{equation}
Unit axial moves correspond to the six EJ generator differences modulo $N_E$.

\subsection{Orientation Trees}

\begin{definition}[Coordinate-reduction orientation]
A coordinate-reduction orientation is a deterministic parent rule $p(v)$ such that $p(v)$ is adjacent to $v$ and
\begin{equation}
    \rho(p(v))=\rho(v)-1
\end{equation}
for every non-source vertex $v$.
\end{definition}

\begin{lemma}[Fault-free depth]
A coordinate-reduction orientation tree in $G_k$ has depth at most $k$, and a coordinate-reduction orientation tree in $H_t$ has depth at most $t$.
\end{lemma}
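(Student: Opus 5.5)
The plan is to bound the depth of an arbitrary vertex by its layer value and then bound the layer value over the canonical coordinate set. Fix a coordinate-reduction orientation $p$ in either family, with the source placed at the origin, so that $\rho(\text{source})=0$. For any vertex $v$, follow the parent chain $v, p(v), p(p(v)),\dots$. By the defining property of a coordinate-reduction orientation, the layer decreases by exactly one at each step.

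\textbf{Step 1: every chain reaches the source after exactly $\rho(v)$ steps.} Since $\rho$ takes nonnegative integer values, a strictly decreasing chain cannot go on forever. The chain can stop only at a vertex with no parent, and only the source lacks a parent. We also check that the source is the unique vertex of layer $0$: $\rho_G(x,y)=0$ forces $x=y=0$, and $\rho_E(x,y)=0$ forces $x=y=0$. After $j$ steps the chain is at layer $\rho(v)-j$, so it is at layer $0$, hence at the source, after exactly $\rho(v)$ steps. Therefore the tree depth of $v$ equals $\rho(v)$. In particular the parent map has no cycles and defines a rooted spanning tree, because each step strictly decreases $\rho$.

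\textbf{Step 2: bound $\rho$ on the vertex set.} The vertices are identified with the canonical coordinate sets $B_k$ and $H_t$. On $B_k$ we have $\rho_G(x,y)=|x|+|y|\le k$ by definition. On the hexagon $H_t$ we have $\rho_E(x,y)=\max\{|x|,|y|,|x+y|\}\le t$, again by definition. Combining this with Step 1 gives depth at most $k$ in $G_k$ and at most $t$ in $H_t$.

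\textbf{Main obstacle: justifying that the canonical sets really index the vertices.} We need $B_k$ and $H_t$ to be complete, non-repeating systems of representatives of the residues modulo $N_G$ and $N_E$ under $\phi_G$ and $\phi_E$. Then "vertex" and "canonical coordinate" coincide, and $\rho$ is well defined on each vertex.

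For this we first compare sizes. We have $|B_k| = 1+\sum_{r=1}^{k}4r = 2k^2+2k+1 = N_G$, and $|H_t| = 1+\sum_{r=1}^{t}6r = 3t^2+3t+1 = N_E$. By these counts, injectivity of $\phi_G$ and $\phi_E$ on the canonical sets is enough.

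Injectivity is the standard fact that dense Gaussian and EJ networks tile the plane by these balls, as cited from \cite{MartinezDenseGaussian2006,MartinezEJ2008}. If a direct check is wanted, I would argue as follows. Suppose two points of the ball map to the same label. Their difference lies in the ball of radius $2k$ (resp.\ $2t$) and lies in the lattice generated by $\alpha$ and $i\alpha$ (resp.\ by $\alpha$ and $\omega\alpha$). The nonzero lattice vectors of minimal norm have $\rho$-value $2k+1$ (resp.\ $2t+1$), which exceeds the bound on the difference, so the difference is zero.

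Strictly, the lemma's statement only needs $\rho$ to be bounded on the canonical sets, which holds by definition. So I expect this well-definedness step to be cited rather than reproved.
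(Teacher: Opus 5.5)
Your argument is correct and matches the paper's proof: each parent step lowers the layer by exactly one, so the chain from $v$ reaches the unique layer-zero vertex (the source) after exactly $\rho(v)$ steps, and $\rho\le k$ on $B_k$ and $\rho\le t$ on $H_t$. The paper simply takes for granted that $B_k$ and $H_t$ are complete residue systems, and the lemma does not need that check; if you keep it, note that for $H_t$ the relevant lattice is $\ker\phi_E$, spanned by $(t,t+1)$ and $(2t+1,-t)$ (both of layer $2t+1$), because with the paper's literal $\omega=(-1+i\sqrt{3})/2$ the vector $\omega\alpha=-t+\omega$ has layer only $t$.
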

\begin{proof}
Along every parent edge the layer strictly decreases by one. Repeated parent application reaches the unique layer-zero vertex, the source. Hence cycles are impossible and every vertex has source distance equal to its layer. The maximum layer is $k$ in $B_k$ and $t$ in $H_t$.
\end{proof}

\begin{figure}[H]
\centering
\includegraphics[width=0.6\linewidth]{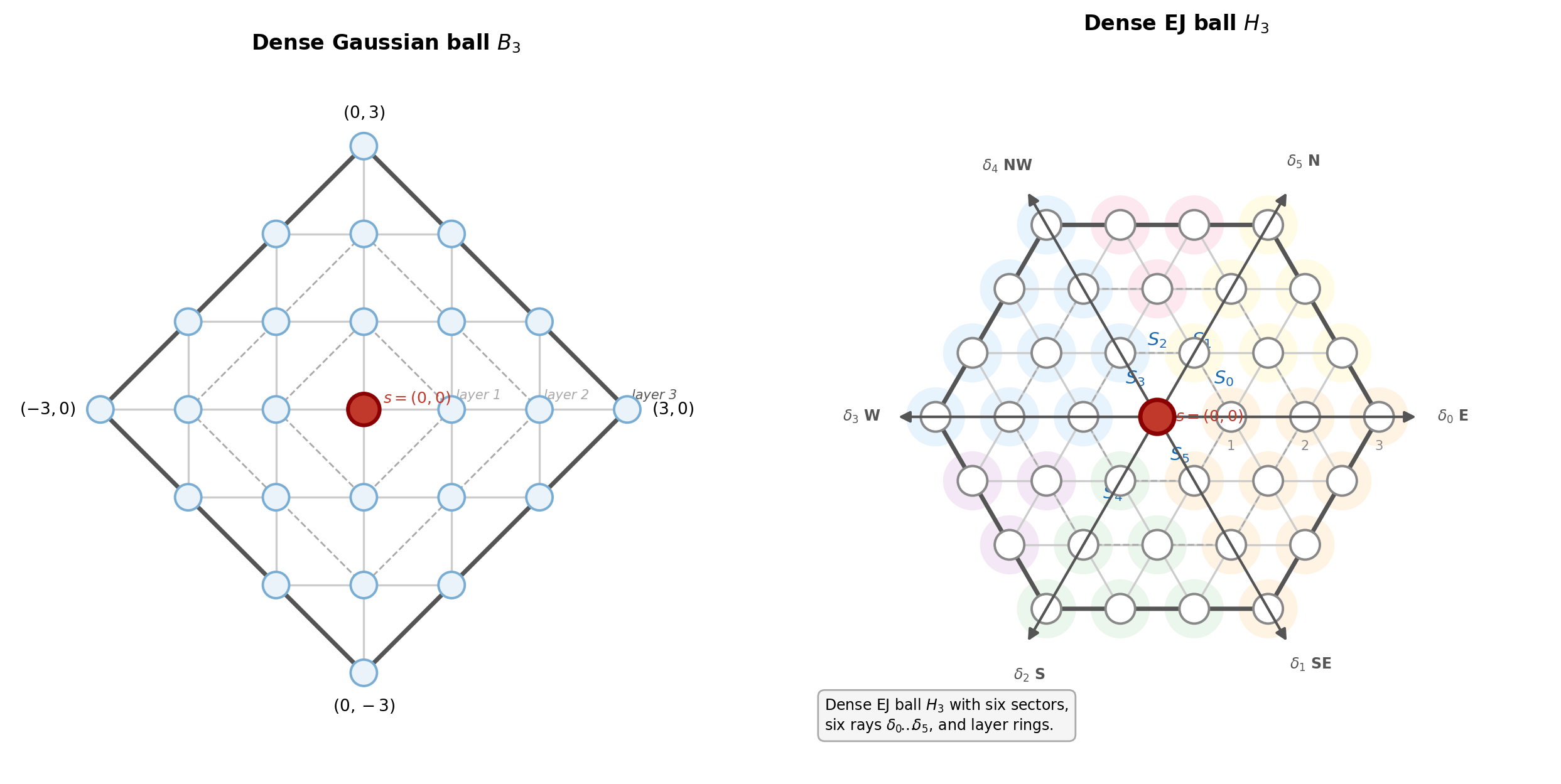}\hfill
\includegraphics[width=0.6\linewidth]{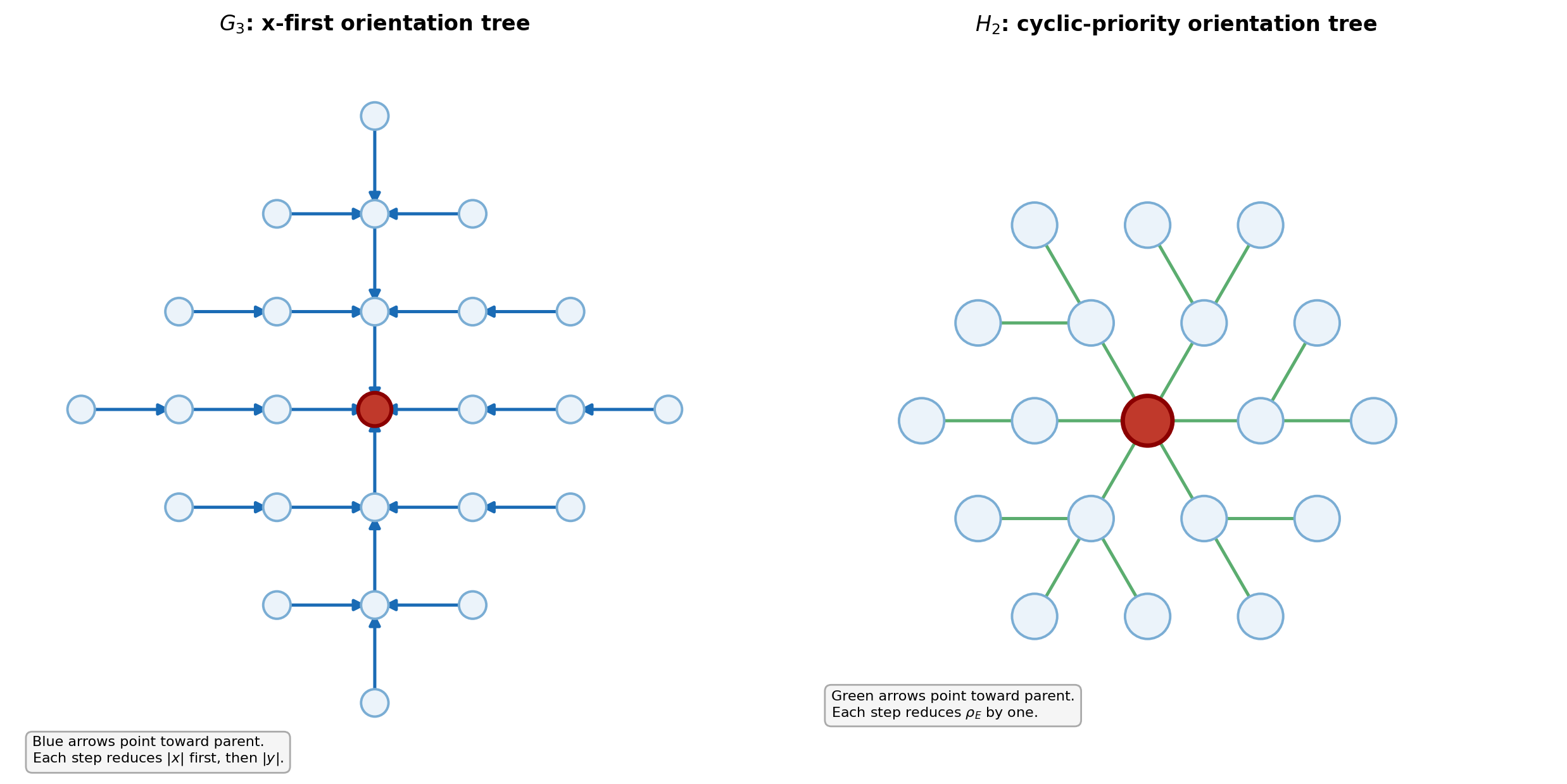}
\caption{Network geometry and fault-free broadcast trees. Left: the Gaussian ball $B_6$ and the EJ ball $H_3$. Right: schematic coordinate-reduction trees for $G_3$ and $H_2$. These figures visualize the layer sets and the spanning tree whose fault-pruned components the selector repairs.}
\label{fig:geometrytrees}
\end{figure}

\subsection{External Component Repair}

Let $T$ be a coordinate-reduction tree rooted at $s$, and let $F$ be a source-free fault set. The fault-pruned tree is
\begin{equation}
    T-F=C_1\cup C_2\cup\cdots\cup C_c,
\end{equation}
where $C_s$ denotes the component containing the source. A repair edge is external if its endpoints lie in two different components of $T-F$.

\begin{lemma}[Component lower bound]
\label{lem:lower}
Any repaired tree that preserves the $c$ healthy components of $T-F$ must use at least $c-1$ external component-crossing repair edges.
\end{lemma}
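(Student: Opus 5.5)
The plan is a contraction-and-counting argument. Collapse each healthy component $C_1,\dots,C_c$ of $T-F$ to a single node, and view the repaired structure at component level.

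First, make the hypothesis precise. A repaired tree $T'$ that \emph{preserves} the components is a spanning tree of the healthy vertex set $V\setminus F$ with two properties: it contains every edge of $T-F$, so each $C_j$ is a connected subtree of $T'$; and every edge of $T'$ not in $T-F$ is a repair edge. Internal repair edges, with both endpoints in the same $C_j$, would create a cycle with the tree path inside $C_j$. So acyclicity of $T'$ forces every repair edge to be external.

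Next, count edges. Since $T'$ spans $V\setminus F$, it has $|V\setminus F|-1$ edges. The forest $T-F$ has $c$ components, so it has $|V\setminus F|-c$ edges. Subtracting, $T'$ contains exactly $c-1$ edges outside $T-F$, and by the previous step all of them are external. This yields the bound, and in fact equality.

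As a cross-check that does not rely on exact equality, consider the quotient multigraph on the $c$ contracted nodes, whose edges are the external repair edges. Connectivity of $T'$ between vertices in different components requires this quotient to be connected, and a connected graph on $c$ nodes has at least $c-1$ edges.

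The main obstacle is interpretive rather than technical: fixing what "preserves the healthy components" means. One could allow a repair that keeps each component's vertex set but reroutes some parent edges inside it. If so, the edge count no longer applies directly. I would therefore rely on the quotient-connectivity argument, which only needs each $C_j$ to be contracted to a node and needs external edges to be those joining distinct $C_j$. That version gives $\ge c-1$ under either reading.
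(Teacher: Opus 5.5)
Your proposal is correct. The argument you ultimately rely on is exactly the paper's proof: contract each $C_j$ to a supernode, observe that connectivity of the repaired tree forces the quotient on $c$ nodes to be connected, and conclude that it has at least $c-1$ edges, each an external component-crossing repair edge. Your extra edge count, comparing $|V\setminus F|-1$ edges in $T'$ with $|V\setminus F|-c$ edges in $T-F$, is a valid refinement under the stricter reading that every edge of $T-F$ is kept, and it shows the number of repair edges is then exactly $c-1$.
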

\begin{proof}
Contract every component of $T-F$ into a supernode. Any repaired broadcast tree over all healthy vertices induces a connected graph on these $c$ supernodes. A connected graph on $c$ vertices has at least $c-1$ edges, and each such edge corresponds to an external component-crossing edge in the original graph.
\end{proof}

\begin{definition}[Depth certificate]
Fix a fault-pruned orientation tree with components $C_1,\ldots,C_c$. A $K$-depth certificate is an ordering of the non-source components and crossing edges $(a_j,b_j)$ such that $b_j$ is in the next component, $a_j$ is in the already repaired part, and
\begin{equation}
    d(s,a_j)+1+\ecc_{C_j}(b_j)\le K.
    \label{eq:certineq}
\end{equation}
\end{definition}

\begin{figure}[H]
\centering
\includegraphics[width=0.65\linewidth]{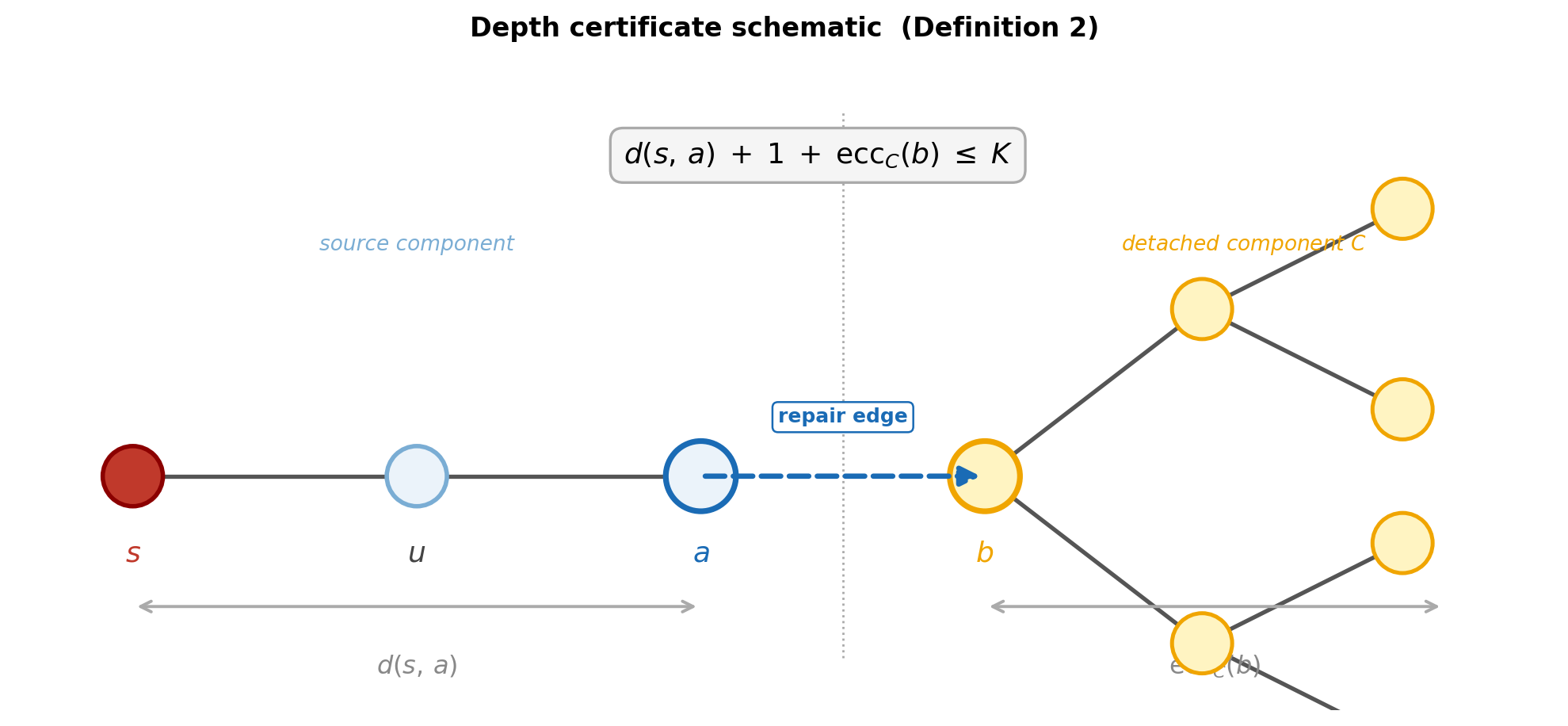}
\caption{Depth-certificate schematic for Definition~2 and Lemma~3. The already repaired part contains $s$ and $a$; the next detached component $C$ is entered at $b$. The depth contribution of attaching $C$ is bounded by $d(s,a)+1+\ecc_C(b)\le K$.}
\label{fig:depthcertificate}
\end{figure}

\begin{lemma}[Certificate lemma]
\label{lem:cert}
If $T-F$ admits a $K$-depth certificate using $c-1$ external edges, then adding those edges gives a non-redundant repaired broadcast tree of depth at most $K$ and with the minimum possible number of external repair edges for the selected orientation.
\end{lemma}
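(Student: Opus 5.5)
The plan is to build the repaired tree by induction along the certificate ordering and then read off the three claimed properties: it is a spanning tree, its depth is at most $K$, and it uses the fewest possible external edges. Let $C_s$ be the source component, and let $C_{j_1},\ldots,C_{j_{c-1}}$ be the non-source components in certificate order, with crossing edges $(a_1,b_1),\ldots,(a_{c-1},b_{c-1})$. Set $R_0=C_s$ and $R_m=R_{m-1}\cup C_{j_m}$. In $R_m$ each component keeps its internal parent edges from $T-F$, with one exception: the component $C_{j_m}$ is re-rooted at $b_m$, and $b_m$ receives the parent $a_m$. Re-rooting means reversing the tree-edge orientations along the unique path from $b_m$ to the old root of $C_{j_m}$. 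This preserves the component's undirected edge set, so its healthy structure is kept. It is needed because $b_m$ need not be the top vertex of its component.

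\textbf{Spanning tree.} We show by induction that $R_m$ is a tree rooted at $s$. For $m=0$ this holds because $C_s$ is a subtree of $T$ containing $s$. For the step, $C_{j_m}$ is a tree disjoint from $R_{m-1}$, and the certificate guarantees $a_m\in R_{m-1}$ and $b_m\in C_{j_m}$. Joining two disjoint trees by a single edge gives a tree. After $c-1$ steps, $R_{c-1}$ contains every healthy vertex, since the components partition $V\setminus F$. Hence every healthy vertex has exactly one parent and there are no cycles, which is exactly the non-redundancy condition.

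\textbf{Depth.} We show by induction that every vertex of $R_m$ has tree depth at most $K$. This requires interpreting $d(s,a_j)$ in the certificate inequality as the depth of $a_j$ in the already repaired tree $R_{j-1}$, not the graph distance in the network. I expect this interpretation to be the main subtle point, and the proof should state it explicitly. With this reading, a vertex $v\in C_{j_m}$ has depth at most
\[
d(s,a_m)+1+d_{C_{j_m}}(b_m,v)\le d(s,a_m)+1+\ecc_{C_{j_m}}(b_m)\le K,
\]
where the middle step uses the definition of eccentricity. Vertices already in $R_{m-1}$ keep their depths, which are at most $K$ by the inductive hypothesis. The base case needs the depth of $C_s$ to be at most $K$. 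This follows from the fault-free depth lemma, because paths inside $C_s$ are paths of $T$ from $s$, so depths there are at most $k$ or $t$. Alternatively, one includes $C_s$ in the certificate bound, since $K$ is always at least the fault-free depth.

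\textbf{Minimality.} Exactly $c-1$ external edges were added. Lemma~\ref{lem:lower} says at least $c-1$ are needed for any repair that preserves the $c$ components, so this count is the minimum for the selected orientation.
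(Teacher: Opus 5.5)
Your proposal is correct and follows the same route as the paper's proof. Both attach the components one at a time in certificate order (a tree joined to a disjoint tree by one edge stays a tree), bound each new component's depth by the certificate inequality, and get minimality from Lemma~\ref{lem:lower}. What you add are points the paper's short proof leaves implicit: re-rooting each component at its entry $b_j$, reading $d(s,a_j)$ as depth in the already repaired tree rather than network distance, and checking that $C_s$ itself has depth at most $K$ (this needs $K$ to be at least the fault-free depth, which holds in every application in the paper).
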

\begin{proof}
Each certificate edge attaches one previously unrepaired tree component to the already repaired tree. Adding one edge between a tree and a disjoint tree cannot create a cycle. After $c-1$ steps all $c$ components are connected, so the result is connected and acyclic over all healthy vertices. The depth of every vertex in a newly attached component is bounded by \eqref{eq:certineq}. Minimum repair count follows from Lemma~\ref{lem:lower}.
\end{proof}

\begin{figure}[H]
\centering
\includegraphics[width=0.75\linewidth]{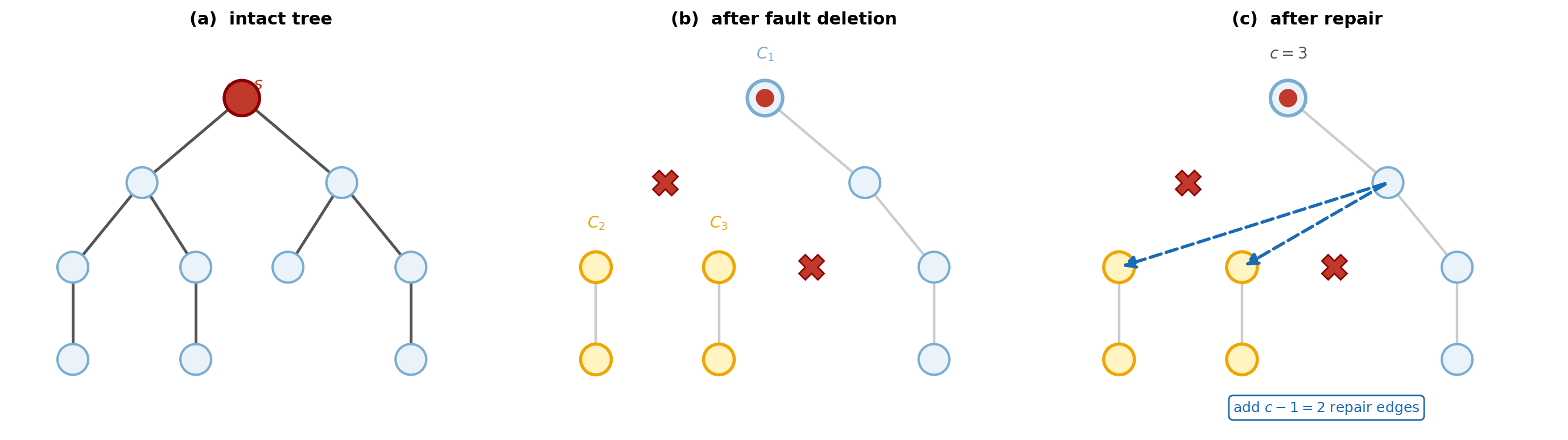}
\caption{Fault fragmentation and component repair. An intact broadcast tree becomes a fault-pruned forest with components $C_1,\ldots,C_c$; adding exactly $c-1$ external component-crossing repair edges reconnects these components into a rooted spanning tree of the healthy subgraph.}
\label{fig:repaircartoon}
\end{figure}

\section{Constant-Time Certificate Selectors}

A certificate selector receives the network parameter and a source-free fault set $F$ with $|F|\le2$. It returns a coordinate-reduction orientation and an ordered repair-edge list. The selector uses only a fixed number of integer comparisons, sign tests, sector-index computations, and modular adjacency tests. The complete repaired parent map can then be materialized by applying the selected orientation to all healthy vertices and replacing the parent relation at the selected component entries.

\begin{algorithm}[H]
\caption{Constant-Time Certificate Selection}
\label{alg:general}
\begin{algorithmic}[1]
\REQUIRE Family $\mathcal{N}\in\{G,EJ\}$, parameter $k$ or $t$, source-free fault set $F$, $|F|\le2$.
\ENSURE Coordinate-reduction orientation $\theta$ and ordered repair list $R$.
\IF{$\mathcal{N}=G$}
    \STATE Classify the faults by axis membership, ray relation, mixed axis/off-axis relation, orthogonal-axis relation, and off-axis row relation.
    \STATE Select the Gaussian certificate case in Table~\ref{tab:gcases}.
\ELSE
    \STATE Classify the faults by EJ ray membership, sector membership, same/opposite ray relation, ray/sector mixed relation, and adjacent/non-adjacent sector relation.
    \STATE Select the EJ certificate case in Table~\ref{tab:ejcases}.
\ENDIF
\STATE Return the orientation and the ordered repair edges specified by the selected certificate case.
\end{algorithmic}
\end{algorithm}

The selector is constant time because Tables~\ref{tab:gcases} and~\ref{tab:ejcases} have constant size and each entry uses a bounded number of coordinate expressions. Boundary wraparound is not treated geometrically by drawing the hexagon or diamond; it is treated algebraically by the congruence maps \eqref{eq:phiG} and \eqref{eq:phiE}. Thus an edge whose canonical coordinates wrap at the boundary is accepted exactly when its label difference is one of the network generators modulo the network order.

\begin{table}[H]
\centering
\caption{Symmetry normalization used by the selectors. The inverse map is applied to both the selected orientation and every selected repair edge.}
\label{tab:symnorm}
\begin{adjustbox}{max width=\textwidth}
\begin{tabular}{p{0.12\linewidth}p{0.38\linewidth}p{0.40\linewidth}}
\toprule
Family & Canonical representative & Inverse map after selection\\
\midrule
Gaussian & Reflect signs so the primary axis/ray is positive. When needed, exchange $(x,y)$ and $(y,x)$ so the selected formula is written in the $x$-priority representative. & Undo the sign reflection and, if used, undo the coordinate exchange. The inverse is applied to the orientation name and to each ordered repair edge before the plan is returned.\\
EJ & Apply a dihedral symmetry of the hexagon so the first relevant ray is $\delta_0$ and the interacting sector lies in the canonical cyclic position used in Table~\ref{tab:ejselector}. & Undo the dihedral permutation on ray indices, sector indices, the selected orientation family, and every coordinate endpoint of every repair edge.\\
\bottomrule
\end{tabular}
\end{adjustbox}
\end{table}

Table~\ref{tab:symnorm} is included because both selectors are stated on normalized representatives. The proofs are carried out only for those representatives; correctness for every placement follows because the network metrics, quotient-neighbor relations, and certificate inequalities are invariant under the corresponding symmetry group.

\subsection{Algebraic Selector Primitives}
\label{subsec:selector-primitives}

The selector tables use two small algebraic primitives. They are included to make the construction independent of drawings and to handle boundary wraparound uniformly.

For the Gaussian network define
\begin{equation}
\begin{aligned}
\Gamma_G(u)=\{v\in B_k:
&\ \phi_G(v)-\phi_G(u)\equiv \pm k \\
&\text{ or }\ \pm(k+1) \pmod {N_G}\}.
\end{aligned}
\end{equation}
For the EJ network define
\begin{equation}
\begin{aligned}
\Gamma_E(u)=\{v\in H_t:
&\ \phi_E(v)-\phi_E(u)\equiv \pm\phi_E(\delta_i) \\
&\pmod {N_E},\ 0\le i<6\}.
\end{aligned}
\end{equation}
Thus $\Gamma_G$ and $\Gamma_E$ are quotient-neighbor maps; if an ordinary coordinate neighbor exits the canonical ball, the congruence returns its unique canonical representative. This is the only boundary rule used in the algorithms.

For a finite ordered list $L=(e_1,\ldots,e_m)$ of possible edges and a fault set $F$, let
\begin{equation}
\begin{aligned}
\First_{\mathcal N}(L,F)=&\text{ the first }e_j=\{u,v\}\text{ such that}\\
&u,v\notin F\text{ and }v\in\Gamma_{\mathcal N}(u),
\end{aligned}
\end{equation}
where $\mathcal N$ is $G$ or $E$. If no such edge exists, the entry is omitted. Since all lists in Tables~\ref{tab:gselector} and~\ref{tab:ejselector} have bounded length, $\First_{\mathcal N}$ is a constant-time algebraic operation.

For a selected orientation $\theta$, the only components that can be detached by one or two faults are the children of the faulty vertices in $T_\theta$ that remain healthy. The selector tables give one ordered attachment formula for each possible child-root family. A listed edge is used exactly when its component is nonempty; empty intervals and leaf deletions therefore add no repair edge. The proofs below show that the listed formulas attach all non-source components and never attach two edges to the same component.

\noindent\textit{Family-specific execution.} After Algorithm~\ref{alg:general} chooses the network family, the Gaussian selector normalizes the fault set, reads the corresponding row of Table~\ref{tab:gselector}, evaluates the bounded lists by $\First_G$, omits empty child components, and maps the selected edges back by the inverse symmetry in Table~\ref{tab:symnorm}. The EJ selector performs the same steps with the dihedral normalization, the operators in Table~\ref{tab:ejselector}, and $\First_E$. This inline description avoids long floating pseudocode and keeps the table-to-proof mapping visible.

\section{Gaussian Certificates}

This section proves the Gaussian part. The source is $(0,0)$, and all coordinates are in $B_k$.

\subsection{Gaussian Case Partition}

A coordinate is an axis coordinate if $xy=0$, and off-axis otherwise. For two axis faults, there are three mutually exclusive cases: same ray, opposite ray, and orthogonal axes. For one axis and one off-axis fault we use the mixed case. For two off-axis faults we use the O3 off-axis case, with the same-row and different-row subcases included.

\begin{table}[H]
\centering
\caption{Normalized Gaussian selector formulas. Here $\epsilon,\eta\in\{\pm1\}$ are signs, $1\le a<b\le k$, and every row is reflected or coordinate-swapped back to the original position. Each $L_j$ is an ordered list passed to $\First_G$; empty detached components are omitted.}
\label{tab:gselector}
\begin{adjustbox}{max width=\textwidth}
\begin{tabular}{p{0.18\linewidth}p{0.17\linewidth}p{0.53\linewidth}}
\toprule
Normalized faults & Orientation & Ordered repair lists \\
\midrule
$\{(a,0)\}$ & $x$-first & $L_1=((a+1,-1),(a+1,0)),((a+1,1),(a+1,0))$ if $a<k$ \\
$\{(x,y)\},\ xy\ne0$ & $x$-first & $L_1=((x+\epsilon,y-\eta),(x+\epsilon,y)),((x-\epsilon,y+\eta),(x,y+\eta))$, where $\epsilon=\sgn x$, $\eta=\sgn y$ \\
$\{(a,0),(b,0)\}$ & $x$-first & $L_1$ for the interval after $a$ if $a+1<b$; $L_2$ for the tail after $b$ if $b<k$, both using the one-axis form \\
$\{(a,0),(-b,0)\}$ & $x$-first & One tail list on each side: $((a+1,\pm1),(a+1,0))$ and $((-b-1,\pm1),(-b-1,0))$ when nonempty \\
$\{(a,0),(0,b)\}$ & O6 & $e_1,e_2,e_3,e_4$ from \eqref{eq:o6e1}--\eqref{eq:o6e4}; invalid or empty-component entries are omitted \\
$\{(a,0),(a-1,-1)\}$ & $x$-first & Diagonal-adjacent mixed patch: use $((a+1,1),(a+1,0)),((a,1),(a,0)),((a,-2),(a,-1))$ before the generic mixed list \\
axis/off-axis, otherwise & axis-first & Axis-tail list plus the off-axis side-entry list, ordered by increasing cut layer \\
two off-axis & $x$-first if rows differ, else $y$-first & One side-entry list for each fault; same-row pairs use coordinate exchange before applying the side-entry lists \\
\bottomrule
\end{tabular}
\end{adjustbox}
\end{table}

\noindent\textit{How to read Table~\ref{tab:gselector}.} Each row gives a normalized representative; signs and coordinate swaps are applied before the row is used and inverted after the edge list is chosen. The lists are not witnesses found by search. They are algebraic formulas for the first possible component entry. The operator $\First_G$ chooses between boundary-equivalent transverse candidates and discards invalid boundary representatives. A row contributes an edge only when the corresponding interval, tail, suffix, or cap is nonempty. Examples~\ref{ex:gsameray} and~\ref{ex:goffrow} expand two rows that are otherwise easy to misread.

\begin{table}[H]
\centering
\caption{Gaussian constant-time certificate table for $|F|\le2$. Empty suffixes are ignored. Reflections and coordinate exchange are applied algebraically.}
\label{tab:gcases}
\begin{adjustbox}{max width=\textwidth}
\begin{tabular}{p{0.20\linewidth}p{0.25\linewidth}p{0.22\linewidth}p{0.12\linewidth}p{0.12\linewidth}}
\toprule
Fault signature & Selected orientation & Component structure & Max $c$ & Depth bound\\
\midrule
One axis fault & Axis-priority tree with transverse boundary entry & One axis tail or boundary micro-tail & $2$ & $k+2$\\
One off-axis fault & $x$-priority or reflected $x$-priority side-entry tree & One monotone suffix & $2$ & $k$\\
Two axis faults, same ray & Same-axis priority tree & Bounded interval plus outer tail & $3$ & $k+2$\\
Two axis faults, opposite rays & Same-axis priority tree with two transverse entries & Two independent tails & $3$ & $k+2$\\
Mixed axis/off-axis & Axis-priority tree; diagonal-adjacent subcase uses near-corner entry & Axis tail plus one off-axis suffix & $3$ & $k+2$\\
Orthogonal-axis pair & Reflected O6 tree & Axis tail, two row fragments, upper cap & $5$ & $k+2$\\
Two off-axis faults & O3 side-entry tree; same-row pairs use coordinate exchange & One or two off-axis suffixes & $3$ & $k+1$\\
\bottomrule
\end{tabular}
\end{adjustbox}
\end{table}

\subsection{Layer-Suffix Inequality}

\begin{lemma}[Gaussian suffix inequality]
\label{lem:gsuffix}
Let $C$ be a detached Gaussian suffix whose entry $b$ is beyond a cut at layer $r$. If $\ecc_C(b)\le k-r$ and the repaired-side endpoint $a$ has repaired depth at most $r+1$, then attaching $C$ through $(a,b)$ has depth contribution at most $k+2$. If $\ecc_C(b)\le k-r-1$ and $d(s,a)\le r$, then the contribution is at most $k$.
\end{lemma}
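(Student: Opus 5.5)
The lemma is a direct arithmetic consequence of how depths compose when a component is attached through a single crossing edge, so I plan to read it off the certificate inequality \eqref{eq:certineq} rather than argue from geometry. Attach $C$ through $(a,b)$, with $a$ in the already repaired part and $b\in C$. Every vertex $w\in C$ then reaches the source by going from $w$ to $b$ inside $C$, crossing $(b,a)$, and following the repaired tree from $a$ back to $s$. So the new depth of $w$ is at most
\[
d(s,a)+1+d_C(b,w)\le d(s,a)+1+\ecc_C(b).
\]
The first step is to state this bound explicitly. I will also note that the tree inside $C$ is rerooted at $b$, and the repaired part is left unchanged. Hence depths of vertices outside $C$ are not affected. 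This is the same mechanism used in the proof of Lemma~\ref{lem:cert}.

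The second step substitutes the two hypothesis pairs. In the first case, $d(s,a)\le r+1$ and $\ecc_C(b)\le k-r$ give $(r+1)+1+(k-r)=k+2$. In the second case, $d(s,a)\le r$ and $\ecc_C(b)\le k-r-1$ give $r+1+(k-r-1)=k$. Both bounds are independent of $r$, which is the point of phrasing the lemma in terms of the cut layer. I will also record that the contribution is exactly the left side of \eqref{eq:certineq} with $K=k+2$ or $K=k$. The lemma can then be invoked case by case as one entry of a $K$-depth certificate.

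The only step needing care is interpretation rather than calculation. The quantity ``repaired depth'' of $a$ must mean its distance to $s$ in the partially repaired tree at the moment $C$ is attached. It must not mean its graph distance. This reading is guaranteed by the ordering in the depth-certificate definition, where $a_j$ lies in the already repaired part. I will therefore state explicitly that $d(s,a)$ in \eqref{eq:certineq} is measured in that repaired tree. Then chaining attachments stays valid, because depths in the repaired part never increase once set.

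The geometric facts that $\ecc_C(b)\le k-r$ for the actual suffixes are hypotheses here. They are not consequences of this lemma, and they will be verified in the case analyses that cite it.
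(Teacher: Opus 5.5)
Your proposal is correct and matches the paper's proof, which simply substitutes the two hypothesis pairs into $d(s,a)+1+\ecc_C(b)$ to obtain $(r+1)+1+(k-r)=k+2$ and $r+1+(k-r-1)=k$. Your extra justification that this quantity bounds the depth of every vertex of $C$ (rerooting $C$ at $b$, existing depths unchanged) is the mechanism behind Lemma~\ref{lem:cert} and is consistent with it.
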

\begin{proof}
The first bound follows from
\begin{equation}
    d(s,a)+1+\ecc_C(b)\le (r+1)+1+(k-r)=k+2.
\end{equation}
The second follows from
\begin{equation}
    d(s,a)+1+\ecc_C(b)\le r+1+(k-r-1)=k.
\end{equation}
\end{proof}

\begin{lemma}[Gaussian axis cases]
\label{lem:gaxis}
Every one-axis, same-ray two-axis, and opposite-ray two-axis fault placement admits a selected certificate satisfying the bounds in Table~\ref{tab:gcases}.
\end{lemma}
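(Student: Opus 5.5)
We reduce, via the symmetry normalization of Table~\ref{tab:symnorm}, to faults on the positive $x$-ray: $\{(a,0)\}$, $\{(a,0),(b,0)\}$ with $1\le a<b\le k$, and $\{(a,0),(-b,0)\}$. We work in the $x$-first (axis-priority) orientation of Table~\ref{tab:gselector}. Here an off-axis vertex $(x,y)$ with $y\ne0$ takes parent $(x-\sgn x,y)$ while $x\ne0$, and then moves vertically. An axis vertex $(x,0)$ takes parent $(x-\sgn x,0)$.

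The first step is to show that deleting an axis fault $(a,0)$ detaches exactly one component. Its only children are $(a+1,0)$ and possibly $(a,\pm1)$. In $x$-first orientation, however, $(a,\pm1)$ has parent $(a-1,\pm1)$, not $(a,0)$. Hence the only child is $(a+1,0)$, and the detached component is the axis tail $\{(x,0):x>a\}$ together with the horizontal rows hanging off it. This is the set $D_a=\{(x,y):x>a,\ |x|+|y|\le k\}$, plus any wraparound micro-tail when $a=k$. It is a monotone suffix beyond the cut at layer $r=a$. Its entry $b=(a+1,0)$ has $\ecc_{D_a}(b)=\max(x-a-1+|y|)\le k-a-1$ inside the component, since from $(a+1,0)$ one walks along the axis and then vertically.

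The second step attaches $D_a$ through the table edge $((a+1,\pm1),(a+1,0))$. This requires the observation that $(a+1,\pm1)$ is \emph{not} in $D_a$ under $x$-first orientation, so the edge is external. Actually $(a+1,1)$ has parent $(a,1)$, which is healthy, so it lies in the source component at depth $a+2$. Then
\[
d(s,a')+1+\ecc \le (a+2)+1+(k-a-1)=k+2,
\]
which is Lemma~\ref{lem:gsuffix} with repaired depth $r+1$ after adjusting the eccentricity accounting. The depth bound is tight exactly because the transverse detour costs $2$. If $a=k$, the component is empty or a boundary micro-tail reached by wraparound. In that case $\First_G$ selects a valid quotient neighbor, and we verify with $\Gamma_G$ that the entry has depth at most $k+1$. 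This gives $c\le2$ and exactly $c-1$ edges, and Lemma~\ref{lem:cert} concludes.

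For two faults on the same ray, deleting $(a,0)$ and $(b,0)$ yields the interval component $\{a<x<b\}$ (its rows) and the outer tail $D_b$, so $c\le3$. Each is entered through its own transverse edge at $x=a+1$ and $x=b+1$ respectively. The source-side endpoints $(a+1,1)$ and $(b+1,1)$ are reached through row $y=1$, which avoids both faults, so they have depths $a+2$ and $b+2$. The same computation gives at most $k+2$. The two edges attach distinct components to the source component, so the attachment order is immaterial and acyclicity is automatic. For opposite rays the two tails lie in the half-planes $x>a$ and $x<-b$. They are disjoint and each is repaired independently by the mirrored edge, giving $c\le3$, two edges, and depth at most $k+2$.

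I expect the main obstacle to be the boundary cases $a=k$, $b=k$, $b=a+1$ (empty interval), and the wraparound adjacencies. There, the candidate $(a+1,\pm1)$ may leave $B_k$, and one must confirm that the quotient neighbor returned by $\First_G$ is healthy, lies in the source component, and still satisfies the depth inequality. I would handle this by explicit congruence computations with $\phi_G$ at layer $k$, showing that any detached boundary piece has eccentricity $0$ or $1$.
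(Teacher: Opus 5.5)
Your route is the paper's own: the $x$-first tree, one transverse entry $((a+1,\pm1),(a+1,0))$ per nonempty axis fragment, and the values $(a+2)+1+(k-a-1)=k+2$ for tails and $b+1\le k+1$ for the interval. However, your identification of the detached component is wrong, and as written it contradicts your own externality step.

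\textbf{The detached component.} In the orientation you define, a vertex $(x,y)$ with $y\ne0$ runs along row $y$ to $(0,y)$ and meets the $x$-axis only at the source. So the argument you gave for $(a,\pm1)$ applies verbatim to every $(x,\pm1)$ with $x>a$, and nothing hangs off the tail. The detached set is only $T_a=\{(u,0):a<u\le k\}$, not $D_a=\{(x,y):x>a,\ |x|+|y|\le k\}$. Your phrase ``along the axis and then vertically'' describes the $y$-first tree, in which $(a,0)$ would have three children and $c=2$ would fail. Taken literally, $D_a$ contains $(a+1,\pm1)$ whenever $a\le k-2$. The repair edge would then be internal, which is exactly what your ``Actually'' sentence denies. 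The same fix applies to the interval ``(its rows)'': the middle component is $\{(u,0):a<u<b\}$, of eccentricity $b-a-2$ from $(a+1,0)$. With $T_a$ in place of $D_a$, the bound $\ecc\le k-a-1$ is just the length of the axis path, and your argument coincides with the paper's.

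\textbf{The boundary case.} The case that needs $\Gamma_G$ is $a=k-1$ (or $b=k-1$), not $a=k$. Every wraparound edge joins two layer-$k$ vertices, so no parent edge wraps; for $a=k$ the tail is therefore empty and there is no micro-tail. For $a=k-1$ the detached set is $\{(k,0)\}$. Since $(k+1,-k)$ and $(k,k+1)$ lie in $\ker\phi_G$, the candidates reduce to $(k,-1)\equiv(-1,k-1)$ and $(k,1)\equiv(0,-k)$. On the negative side they reduce symmetrically to $(-k,1)\equiv(1,1-k)$ and $(-k,-1)\equiv(0,k)$. All of these have depth $k$, and their parent chains meet the $x$-axis only at the source. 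They therefore avoid every fault in the three placements and give value $k+1$.

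\textbf{The normalization.} Reduce to the positive $x$-ray using only the rotations $z\mapsto i^m z$, which are genuine automorphisms of $G_k$. Do not use the coordinate exchange or single-sign reflections at this step. They carry $(\alpha)$ to $(\bar\alpha)$, and the wraparound edges just computed need not survive them. For example, $(k,0)\sim(-1,k-1)$, but the exchanged pair $(0,k)$, $(k-1,-1)$ is not adjacent for $k\ge2$.
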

\begin{proof}
It suffices to analyze the $x$-axis; the $y$-axis case is obtained by exchanging the two coordinate generators. Reflections preserve $\rho_G$, the set of Gaussian generator edges, and the inequalities in Lemma~\ref{lem:gsuffix}.

For a one-fault placement $F=\{(a,0)\}$ with $a>0$, the selected $x$-axis priority tree can detach only the outer tail
\begin{equation}
        T_a=\{(u,0):a<u\le k\},
\end{equation}
with the convention that $T_a=\emptyset$ when $a=k$. If $T_a\ne\emptyset$, its first vertex is $b=(a+1,0)$. One of the two transverse candidates $(a+1,1)$ and $(a+1,-1)$ is a healthy quotient-neighbor of $b$; at boundary layers this is interpreted through $\Gamma_G$. The repaired-side endpoint has depth at most $a+2$ in the selected tree and the tail eccentricity from $b$ is at most $k-a-1$. Hence the attachment value is at most
\begin{equation}
        (a+2)+1+(k-a-1)=k+2.
\end{equation}
If the tail is empty, $c=1$ and no repair edge is used.

For same-ray faults $(a,0)$ and $(b,0)$ with $1\le a<b\le k$, the selected tree is cut into at most two non-source pieces: the bounded interval
\begin{equation}
        I_{a,b}=\{(u,0):a<u<b\}
\end{equation}
if it is nonempty, and the outer tail $T_b$ if $b<k$. The interval entry has repaired-side depth at most $a+2$ and eccentricity at most $b-a-2$, giving value at most $b+1\le k+1$. The tail entry has repaired-side depth at most $b+2$ and eccentricity at most $k-b-1$, giving value at most $k+2$. The two entries lie on disjoint intervals, so each selected edge attaches a different component. Thus $c\le3$ and the number of useful repair edges is exactly $c-1$.

For opposite-ray faults $(a,0)$ and $(-b,0)$, $a,b>0$, there are at most two nonempty tails, one on each side of the source. We verify explicitly that the two transverse entries do not block each other. The positive-side tail starts at $(a+1,0)$ and uses a transverse endpoint of the form $(a+1,\pm1)$ or its quotient representative. This endpoint is not the negative-ray fault $(-b,0)$ because its first coordinate is positive and its second coordinate is nonzero before quotient normalization; under $\Gamma_G$ its label differs from $(a+1,0)$ by one generator, whereas the negative-ray fault is separated by at least the source layer. Moreover, in the selected $x$-priority tree the parent chain from $(a+1,\pm1)$ first reduces the nonzero transverse coordinate only after the $x$-coordinate has been reduced to the source side; it cannot pass through $(-b,0)$, which is on the opposite ray. The reflected argument applies to the negative-side tail and the positive fault. Therefore the two tails have independent healthy entries. Each value is bounded by $(a+2)+1+(k-a-1)\le k+2$ or by the reflected $b$-analogue. Hence $c\le3$ and the repair uses exactly $c-1\le2$ edges.
\end{proof}

\begin{lemma}[Gaussian off-axis and mixed cases]
\label{lem:goffmixed}
Every one-off-axis, mixed axis/off-axis, and two-off-axis placement admits a selected certificate satisfying Table~\ref{tab:gcases}.
\end{lemma}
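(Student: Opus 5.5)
The plan is to mirror the proof of Lemma~\ref{lem:gaxis}: normalize by the symmetries of Table~\ref{tab:symnorm}, identify exactly which healthy subtrees of the selected orientation are detached, and for each one exhibit the entry edge given by $\First_G$ in Table~\ref{tab:gselector}. The depth bound then comes from Lemma~\ref{lem:gsuffix}, and exactness follows from Lemma~\ref{lem:cert}. Reflections and the coordinate exchange preserve $\rho_G$, the generator edges, and $\Gamma_G$, so it suffices to work in the first quadrant, $x,y>0$ for the off-axis fault.

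\emph{One off-axis fault.} Take $F=\{(x,y)\}$ with $x,y>0$ and the $x$-first tree, in which the parent of $(u,v)$ with $u\neq0$ is $(u-1,v)$ and axis points reduce along the axis. The only healthy child of $(x,y)$ is $(x+1,y)$ when $x+y<k$. So the detached piece is the monotone row suffix $S=\{(u,y):x<u\le k-y\}$, and $c\le2$. The first list entry attaches $b=(x+1,y)$ to $a=(x+1,y-1)$. The vertex $a$ lies in a different row, so its parent chain avoids the fault, and $d(s,a)=x+y$. Also $\ecc_S(b)=k-x-y-1$. The certificate value is therefore $x+y+1+(k-x-y-1)=k$, which is the second case of Lemma~\ref{lem:gsuffix} with $r=x+y-1$. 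When $y=1$, the point $a$ is on the axis and still has depth $x+1$. The second list element is the fallback when the first candidate is invalid at the boundary.

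\emph{Two off-axis faults.} If the faults lie in different rows, each detaches its own row suffix. The side-entry edge for each fault lies in the adjacent row, and I will check that it avoids the other fault's suffix. The only conflict arises when the second fault is exactly $(x+1,y-1)$ or lies in that row before $x+1$. In that case I order the attachments by increasing cut layer, so the lower suffix is attached first and its vertices then have repaired depth at most one more than their layer. This gives value at most $k+1$ by the first case of Lemma~\ref{lem:gsuffix}, shifted by one. If the faults share a row, the coordinate exchange turns them into a same-column pair, which the $y$-first tree handles as two disjoint suffixes. Either way $c\le3$, every edge enters a distinct component, and the depth is at most $k+1$.

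\emph{Mixed case.} Take an axis fault $(a,0)$ and an off-axis fault $(x,y)$ with the axis-first tree. The components are the axis tail $T_a$ and the row suffix of the off-axis fault; I must also note that the off-axis suffix may become part of the subtree hanging below $T_a$. The tail is entered transversally as in Lemma~\ref{lem:gaxis}, and the suffix by side entry. I order them by cut layer so that each entry endpoint is already repaired, with value at most $(r+2)+1+(k-r-1)=k+2$.

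\emph{Main obstacle.} The hardest step is the diagonal-adjacent subcase $F=\{(a,0),(a-1,-1)\}$, together with its reflections. Here both natural transverse entries of the tail can be blocked, or can have parents running through a fault, and the depth bound is tight. I will treat it separately with the patch list $((a+1,1),(a+1,0))$, $((a,1),(a,0))$, $((a,-2),(a,-1))$. The first step is to verify, via $\Gamma_G$ at the boundary layers $a\ge k-1$, that some listed endpoint is healthy and already in the repaired part. Next I will confirm that its repaired depth is at most $a+2$. Finally I will check that the edges enter distinct components, which yields $c-1$ edges and depth $k+2$.
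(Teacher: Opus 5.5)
There is a genuine gap, concentrated in the mixed case, with smaller holes elsewhere. Your single-off-axis argument is correct and more explicit than the paper's, apart from one slip: Lemma~\ref{lem:gsuffix} applies with $r=x+y$, not $r=x+y-1$. With $r=x+y-1$ its hypothesis $d(s,a)\le r$ fails, since $d(s,a)=x+y$, although your direct computation of the value $k$ is right.

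\textbf{Mixed case.} This is where the paper's proof does its work. You assert that ordering by cut layer makes every entry endpoint healthy and already repaired, and that $\{(a,0),(a-1,-1)\}$ is the only exception, but neither claim is argued. The missing idea is the paper's blocking classification. In the $x$-first tree a parent chain runs along its row to the $y$-axis and then along the $y$-axis, so it meets the $x$-axis only at the source unless it starts there. Hence, after reflecting to $y>0$, the axis fault can interfere with the lower side entry of $f=(x,y)$ only if $y=1$ and $x>0$. Then there are three cases:
\begin{itemize}
\item $x+1<a$: there is no interference.
\item $x+1=a$: the endpoint is the fault itself, which is the diagonal-adjacent patch.
\item $x+1>a$: the endpoint $(x+1,0)$ lies inside $T_a$. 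Attaching $T_a$ first through $(a+1,-1)$, whose chain stays in row $-1$, gives value $(x+3)+1+(k-x-2)=k+2$. This is exactly what your ordering needs, and the paper's own wording passes over it.
\end{itemize}
Symmetrically, $f$ can block a transverse tail entry only from row $\pm1$, so the opposite side is always free.

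Your remark that the suffix ``may become part of the subtree hanging below $T_a$'' shows the tree is not pinned down. In the $x$-first tree $T_a$ is a bare path. In a tree with the $x$-axis as trunk, one axis fault detaches up to three components, which contradicts your component list.

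\textbf{Diagonal-adjacent subcase.} This is still only a plan. The patch must also yield two edges, not one ``listed endpoint'':
\begin{itemize}
\item $((a+1,1),(a+1,0))$ for the tail, because the chain of $(a+1,-1)$ passes through $(a-1,-1)$;
\item $((a,-2),(a,-1))$ for the suffix, with value $(a+2)+1+(k-1-a)=k+2$.
\end{itemize}
The middle edge touches the fault $(a,0)$ and is always discarded by $\First_G$.

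\textbf{Two off-axis faults.} Reordering does not help when the other fault is exactly $(x+1,y-1)$, since then the endpoint is faulty rather than detached. The opposite entry $(x+1,y+1)$ only gives $(x+y+2)+1+(k-x-y-1)=k+2$, which breaks the $k+1$ bound you claim. What works is to attach the lower suffix first and enter via $(x+2,y-1)\to(x+2,y)$, giving value $\max(x+y+3,k)\le k+1$. When $x+y=k-1$, use a layer-$k$ wraparound neighbor instead.

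Finally, after the coordinate exchange a same-row pair becomes a same-column pair. It is the $x$-first tree, not the $y$-first tree, that splits such a pair into disjoint suffixes.
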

\begin{proof}
For one off-axis fault $f=(x,y)$, $xy\ne0$, write $r=\rho_G(f)$ and let $\epsilon=\sgn(x)$, $\eta=\sgn(y)$. The side-entry list in Table~\ref{tab:gselector} contains one candidate on each transverse side of the descendant suffix. The selected endpoint $a$ outside the suffix has layer at most $r$, and the component entry $b$ has remaining eccentricity at most $k-r-1$. Lemma~\ref{lem:gsuffix} gives value at most $k$. Since a single off-axis deletion can detach only one monotone suffix in the selected orientation, $c\le2$.

Consider a mixed placement consisting of an axis fault
and an off-axis fault. By reflection and coordinate exchange,
assume the axis fault is $(a,0)$ with $a>0$ and the off-axis
fault is $f=(x,y)$, $xy\ne0$. We first identify precisely which
placements allow blocking. In the $x$-priority orientation, the
off-axis suffix $C_f$ is monotone: its first vertex beyond the cut
is
\begin{equation}
    f'=(x+\epsilon,y),
\end{equation}
and the two side-entry endpoints are
\begin{equation}
    e^-=(x+\epsilon,y-\eta),\qquad
    e^+=(x-\epsilon,y+\eta),
\end{equation}
where $\epsilon=\operatorname{sgn}(x)$ and
$\eta=\operatorname{sgn}(y)$. The axis fault $(a,0)$ blocks a
side-entry endpoint $e$ either by occupying $e$ directly or by
lying on the $x$-priority parent chain from $e$ to the source
before that chain reaches the source. In the $x$-priority tree,
the parent chain from $e$ reduces $|x|$ at every step until
$x=0$, then reduces $|y|$; it therefore visits the $x$-axis only
at the single node $(0,0)$ unless $e$ itself lies on the $x$-axis.
Consequently, $(a,0)$ can lie on the parent chain of $e$ only
if $e$ is on the $x$-axis, which requires $y-\eta=0$ for $e^-$
or $y+\eta=0$ for $e^+$, i.e., $|y|=1$. For $e$ to be the fault
$(a,0)$ directly, the first coordinate must satisfy $x+\epsilon=a$
for $e^-$ or $x-\epsilon=a$ for $e^+$, and the second coordinate
must be zero, again forcing $|y|=1$. In both cases $|y|=1$ and
the first coordinate of the blocked endpoint is $a$ or $a\pm1$;
under the positive-$x$ normalization this reduces to
$x\in\{a-1,a\}$. After reflecting the sign of $y$ to $-1$ and
applying coordinate exchange if needed, the only blocked family is
\begin{equation}
F = \{(a,0),(a-1,-1)\}
\end{equation}
under reflection and coordinate exchange. These are exactly
the rows for which the selector invokes the diagonal-adjacent
mixed patch in Table~\ref{tab:gselector}.

If the placement is not diagonal-adjacent, the off-axis side entry is not the axis fault and its parent chain does not pass through the axis fault. The axis tail and the off-axis suffix are therefore attached independently, in increasing cut layer. The axis piece has value at most $k+2$ by Lemma~\ref{lem:gaxis}, and the off-axis suffix has value at most $k$ by the preceding paragraph. If the placement is diagonal-adjacent, the patch selects the opposite transverse side. For the representative $\{(a,0),(a-1,-1)\}$, the repaired-side endpoint of the patched off-axis entry has depth at most $a+1$, while the remaining suffix eccentricity is at most $k-a$. Hence the value is at most
\begin{equation}
        (a+1)+1+(k-a)=k+2.
\end{equation}
The axis component, if nonempty, is attached by its ordinary transverse tail edge. Thus every mixed placement has $c\le3$ and uses exactly $c-1$ repair edges.

It remains to consider two off-axis faults. If their row coordinates differ, the two side-entry endpoints lie on distinct transverse rows after normalization. One side entry can be blocked only by a fault on the same transverse row, which is excluded; hence at least one entry for each suffix remains healthy. Each non-boundary suffix has value at most $k$ by Lemma~\ref{lem:gsuffix}. A boundary microcomponent has one or two vertices, so its eccentricity is at most one and it is attached from a source-component boundary endpoint of depth at most $k$, giving value at most $k+1$.

If the two off-axis faults are on the same row, the selector applies either reflection in that row or the coordinate exchange $(x,y)\mapsto(y,x)$ before the side-entry rule. This exchange preserves $\rho_G$, sends unit coordinate edges to unit coordinate edges, and swaps the generator differences $k$ and $k+1$, so the quotient adjacency and all certificate values are unchanged. After the exchange, the pair is either a different-row pair, already handled, or an ancestor-ordered pair on a monotone suffix. In the ancestor case the inner detached component is attached first; its eccentricity terminates before the outer cut and its value is at most $k$. The outer suffix is then attached through the remaining side entry and has value at most $k+1$, including the boundary microcomponent case. Thus $c\le3$ and the two-off-axis depth is at most $k+1\le k+2$.
\end{proof}

\begin{lemma}[Gaussian O6 orthogonal-axis certificate]
\label{lem:o6}
Let, up to reflection and coordinate exchange,
\begin{equation}
    F=\{(a,0),(0,b)\},\qquad 1\le a,b,
\end{equation}
with both faults in $B_k$. The selected O6 orientation has at most five fault-pruned components and admits a $k+2$ certificate with at most four repair edges.
\end{lemma}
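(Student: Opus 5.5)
The O6 orientation has not been defined explicitly before this point; its edges $e_1,\ldots,e_4$ are only referenced through equations \eqref{eq:o6e1}--\eqref{eq:o6e4}. I therefore fix a concrete rule consistent with the component structure in Table~\ref{tab:gcases} (axis tail, two row fragments, upper cap). Work in the normalized quadrant picture with faults $f_1=(a,0)$ and $f_2=(0,b)$. The O6 tree is taken as follows:
\begin{itemize}
\item on the positive $x$-axis, and on rows $y\neq 0$ with $x\neq 0$, the tree is $x$-first, i.e.\ parent $(x-\sgn x,y)$;
\item on the $y$-axis, the parent is $(0,y-\sgn y)$.
\end{itemize}
This is a coordinate-reduction orientation, so it is acyclic by the fault-free depth lemma.

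I then list the children of $f_1$ and $f_2$ in $T_{O6}$. Removing $f_1$ detaches the axis tail $T_a=\{(u,0):a<u\le k\}$. Removing $f_2$ detaches three pieces:
\begin{itemize}
\item the upper cap $\{(0,y):b<y\le k\}$ together with the $x$-first rows hanging from it;
\item the row fragments $\{(x,b):x>0\}$ and $\{(x,b):x<0\}$ hanging from $(\pm1,b)$.
\end{itemize}
The row fragments of the cap vertices are attached to the cap, so they do not form separate components. This gives at most $1+1+3=5$ components, hence $c\le 5$. By Lemma~\ref{lem:lower} and Lemma~\ref{lem:cert}, it then suffices to exhibit a $(k+2)$-depth certificate using exactly $c-1\le4$ edges.

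I next order the non-source components by increasing cut layer and give an entry for each.
\begin{enumerate}
\item The axis tail is entered at $(a+1,0)$ from $(a+1,\pm1)$. The choice is made by $\First_G$, picking the side that avoids row $b$ when $b=1$. The endpoint has depth $a+2$ and the eccentricity of the tail is $k-a-1$, so Lemma~\ref{lem:gsuffix} gives value $k+2$.
\item Each row fragment $\{(x,b):\pm x>0\}$ is entered at $(\pm1,b)$ from $(\pm1,b-1)$. This endpoint has depth $b$, unless it lies on the dead axis tail, i.e.\ $b=1$ and $a=1$. The fragment has eccentricity $k-b-1$, so the value is at most $k$.
\item The cap is entered at $(0,b+1)$ from $(\pm1,b+1)$. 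That endpoint lies in a fragment already attached, at depth at most $b+2$. The cap (including its rows) has eccentricity at most $k-b-1$ from $(0,b+1)$, giving value at most $k+2$.
\end{enumerate}
Each edge joins a new component to the repaired part, and the four targets are distinct, so the result has exactly $c-1$ external edges. When an interval or tail is empty (for example $a=k$, $b=k$, or $|a|+|b|>k$ impossible), the corresponding entry is omitted, as Table~\ref{tab:gselector} specifies.

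The main obstacle is the small and boundary cases, where a chosen endpoint is itself faulty, lies on a detached component, or wraps around through $\Gamma_G$. The critical cases are $a=1$ or $b=1$, where $(1,b-1)$ or $(a+1,1)$ may lie on the wrong component, and $a+b$ close to $k$, where rows are short and quotient neighbors wrap around. My plan for these is:
\begin{itemize}
\item Reorder the certificate so that a component whose natural endpoint is blocked is attached after the component containing that endpoint. For example, if $a=b=1$, attach the row fragment via the cap or via the mirrored side $(-1,0)$.
\item Check the resulting depth using the fact that wrapped neighbors of boundary vertices lie at layer $k$ or $k-1$, giving value at most $k+1$ for micro-components of eccentricity at most one.
\end{itemize}
This requires a finite enumeration of the few $(a,b)$ patterns with $\min(a,b)\le1$ or $a+b\ge k-1$; elsewhere the generic inequalities above apply uniformly.
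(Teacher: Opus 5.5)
Your decomposition matches the paper's: under your $x$-first rule the two faults cut off exactly the positive tail, the row fragments $\{(x,b):x>0\}$ and $\{(x,b):x<0\}$, and the cap. Your tail and right-fragment edges are the paper's $e_3$ and $e_2$. Entering the left fragment at $(-1,b)$ instead of the paper's $(-a,b)$ is harmless and gives value $k$.

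The genuine gap is step~3. In your own orientation, $(\pm1,b+1)$ has parent $(0,b+1)$, whose parent is the fault $(0,b)$. So $(\pm1,b+1)$ lies in the cap, not in a row fragment (you placed the fragments at height $b$). The edge $\{(\pm1,b+1),(0,b+1)\}$ is therefore a tree edge inside the cap and attaches nothing. In fact, for $b+1<k$ the only neighbor of $(0,b+1)$ outside the cap is the fault, so the cap cannot be entered at $(0,b+1)$ at all. Your accounting with endpoint depth $b+2$ and eccentricity $k-b-1$ from $(0,b+1)$ is therefore void.

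The cap must be entered at a row-$(b+1)$ vertex from an already attached fragment. This is the paper's $e_4=\{(1,b),(1,b+1)\}$. After $e_2$, the endpoint $(1,b)$ has repaired depth $b+1$. The cap eccentricity from $(1,b+1)$ is $k-b$, not $k-b-1$; for example, $(0,k)$ is at distance $1+(k-b-1)$. The value is then $(b+1)+1+(k-b)=k+2$. The bound survives, but only with a different edge and different accounting.

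Your plan for the blocked case $a=b=1$ also fails as written. There $(1,0)$ is the fault, and $(-1,0)$ is not adjacent to the right fragment $\{(x,1):x\ge1\}$. Routing that fragment through the cap costs at least $k+4$. The cap can only be entered from the left fragment, at depth at least $3$. Hence every $(x,2)$ with $x\ge1$ has depth at least $x+4$, and the far end of the right fragment ends up at depth at least $k+4$.

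What works is the following order:
\begin{enumerate}
\item Attach the tail by $\{(2,-1),(2,0)\}$, with value $3+1+(k-2)=k+2$.
\item Attach the right fragment through the repaired tail by $\{(2,0),(2,1)\}$, with value $4+1+(k-3)=k+2$.
\item Attach the left fragment by $\{(-1,0),(-1,1)\}$, with value $k$.
\item Attach the cap from the left fragment by $\{(-1,1),(-1,2)\}$, with value $2+1+(k-1)=k+2$.
\end{enumerate}
These small and boundary configurations must be written out explicitly. The only concrete remedy you offer for them is not a valid certificate.
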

\begin{proof}
The selected orientation is the reflected O6 coordinate-priority tree whose source component contains the lower half-ball. Deletion of $(a,0)$ and $(0,b)$ separates at most four non-source pieces: the positive $x$-tail, two row fragments at height $b$, and the upper cap. Empty pieces are ignored. For the normalized positive representative, the certificate uses the following ordered edges when their target components are nonempty:
\begin{align}
    e_1&=\{(-a,b-1),(-a,b)\},\label{eq:o6e1}\\
    e_2&=\{(1,b-1),(1,b)\},\label{eq:o6e2}\\
    e_3&=\{(a+1,-1),(a+1,0)\},\label{eq:o6e3}\\
    e_4&=\{(1,b),(1,b+1)\}.\label{eq:o6e4}
\end{align}
For $e_1$, the repaired-side endpoint has depth $a+b-1$, and the row-fragment eccentricity is at most $k-b-a$, giving value at most $k$. For $e_2$, the repaired-side endpoint has depth $b$, and the right row-fragment eccentricity is at most $k-b-1$, giving value at most $k$. For $e_3$, the repaired-side endpoint has depth $a+2$, and the axis-tail eccentricity is at most $k-a-1$, giving value at most $k+2$. After $e_2$ attaches the right row fragment, $e_4$ attaches the upper cap from a vertex of repaired depth at most $b+1$, with cap eccentricity at most $k-b$, giving value at most $k+2$. The four edges attach the four possible non-source components, so the repair uses exactly $c-1$ edges.
\end{proof}

\begin{theorem}[Gaussian constant-time repair]
\label{thm:gaussian}
For every $k\ge5$ and every source-free fault set $F\subseteq B_k$ with $|F|\le2$, the Gaussian selector returns a coordinate-reduction orientation and exactly $c-1$ external component-crossing repair edges for that selected orientation. The repaired tree is non-redundant and has depth at most $k+2$.
\end{theorem}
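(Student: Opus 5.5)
The proof is an assembly argument: partition the fault sets, invoke the case lemmas already proved, and apply the Certificate lemma (Lemma~\ref{lem:cert}).

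\emph{Case partition.} First dispose of $F=\emptyset$. Here $c=1$, no repair edge is used, and the fault-free depth lemma gives depth at most $k$. For $|F|\in\{1,2\}$, apply the symmetry normalization of Table~\ref{tab:symnorm}. Sign reflections and the exchange $(x,y)\mapsto(y,x)$ are graph automorphisms of $G_k$. They fix the source, preserve $\rho_G$, and merely swap the generator differences $k$ and $k+1$ in the quotient adjacency $\Gamma_G$. Hence they carry coordinate-reduction orientations to coordinate-reduction orientations, fault-pruned components to fault-pruned components, and depth certificates to depth certificates with the same value of \eqref{eq:certineq}. It therefore suffices to treat normalized representatives.

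Each fault is either an axis coordinate ($xy=0$) or off-axis, so the normalized $F$ falls into exactly one row of Table~\ref{tab:gcases}:
\begin{itemize}
\item one axis fault;
\item one off-axis fault;
\item two axis faults on the same ray, on opposite rays, or on orthogonal axes;
\item one axis fault and one off-axis fault;
\item two off-axis faults.
\end{itemize}
Two axis faults must lie on the same ray, on opposite rays, or on orthogonal axes, because the source is not a fault. So the classification is exhaustive, and each test is a constant number of sign and equality comparisons.

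\emph{Invoking the lemmas.} In each case, cite the corresponding lemma:
\begin{itemize}
\item Lemma~\ref{lem:gaxis} for the one-axis, same-ray and opposite-ray cases;
\item Lemma~\ref{lem:goffmixed} for the one-off-axis, mixed (including the diagonal-adjacent patch) and two-off-axis cases;
\item Lemma~\ref{lem:o6} for the orthogonal-axis pair.
\end{itemize}
Each lemma supplies an ordered list of crossing edges $(a_j,b_j)$ for the selected orientation. In each list every $b_j$ lies in a distinct nonempty non-source component, and $a_j$ lies in the part already repaired. The value bounds are at most $k+2$ in every case.

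One point must be checked uniformly: the edges emitted by the $\First_G$ lists are exactly one per nonempty non-source component. The components detached by at most two faults are the healthy children subtrees of the faulty vertices. The selector emits one entry per such child-root family and omits an entry when its component is empty. So the number of emitted edges is exactly $c-1$. Combining this with the certificate values, Lemma~\ref{lem:cert} yields the conclusion: the repaired structure is a spanning tree of the healthy subgraph (connected, acyclic, non-redundant), has depth at most $k+2$, and uses exactly $c-1$ external edges, which is optimal by Lemma~\ref{lem:lower}. The constant-time claim follows because normalization, row lookup and the bounded $\First_G$ evaluations are $O(1)$ operations.

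\emph{Main obstacle.} The delicate step is confirming that the case lemmas really produce a valid certificate in every normalized placement, in particular at boundary layers. There, wraparound under $\phi_G$ could make a listed endpoint coincide with the other fault, or send its parent chain through that fault. My first approach is to use the hypothesis $k\ge5$. It guarantees that the quotient representatives of the transverse candidates $(a+1,\pm1)$ and the O6 endpoints are distinct from the second fault and have parent chains in the healthy source component, except in the explicitly patched diagonal-adjacent family. The exhaustive validation for $k=5,\ldots,12$ can serve only as corroboration, not as a substitute for this argument.
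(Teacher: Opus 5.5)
Your assembly is the paper's proof. You use the same exhaustive split into the rows of Table~\ref{tab:gcases}, the same citations of Lemmas~\ref{lem:gaxis}, \ref{lem:goffmixed} and~\ref{lem:o6}, and the same conclusion through Lemmas~\ref{lem:cert} and~\ref{lem:lower}; the $F=\emptyset$ case is a harmless addition. The step that fails is the one justification you add for the normalization: sign reflections and the exchange $(x,y)\mapsto(y,x)$ are \emph{not} automorphisms of $G_k$. They preserve $\rho_G$ and ordinary $\Z^2$-adjacency, but not the lattice $\alpha\Z[i]$ that defines wraparound. For instance, the exchange sends $\alpha=(k,k+1)$ to $(k+1,k)$, whose label is $2k(k+1)\equiv-1\pmod{N_G}$. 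So these maps do not preserve $\Gamma_G$. Concretely, take $k=5$, $N_G=61$. The vertices $(5,0)$ and $(-1,4)$ are adjacent (labels $25$ and $19$; indeed $(5,-1)\equiv(-1,4)$). Under the exchange their images $(0,5),(4,-1)$ have labels $30,14$. Under $(x,y)\mapsto(x,-y)$ their images $(5,0),(-1,-4)$ have labels $25,32$. Neither difference lies in $\{\pm5,\pm6\}$. So ``merely swapping the generator differences'' is false, and the reduction to normalized representatives is not justified as you state it. Table~\ref{tab:symnorm} and the same-row step of Lemma~\ref{lem:goffmixed} rest on the same assertion, so you cannot simply import it from the paper.

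Here is what is true and what you should use instead. The rotations $(x,y)\mapsto(-y,x)$ and $(x,y)\mapsto(-x,-y)$ are genuine automorphisms, because multiplication by a unit preserves the ideal $(\alpha)$. They already suffice for every pure-axis row; in particular every orthogonal pair is a rotation of some $\{(a,0),(0,b)\}$ with $a,b\ge1$. The mixed and two-off-axis rows are where mirror images such as $\{(a,0),(a-1,\pm1)\}$, and the $x$-first/$y$-first exchange, get identified. For these, use the following fact: a vertex of layer $<k$ has all four $\Z^2$-neighbours inside $B_k$, so wraparound edges join only layer-$k$ vertices. Consequently the following are all equivariant under the full dihedral group of the diamond: the orientation trees (whose edges join layers $r-1$ and $r$), the fault-pruned components, their eccentricities, and every repair edge with an endpoint below layer $k$. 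Only repair edges joining two layer-$k$ vertices by wraparound must be checked separately in each chirality. These are precisely the boundary micro-tail attachments you list as your main obstacle. For example, for the tail $\{(k,0)\}$ the candidates $(k,1)$ and $(k,-1)$ wrap to $(0,-k)$ and $(-1,k-1)$, which are not mirror images of each other. Invoking $k\ge5$ does not settle this. You need a direct argument, in both chiralities, that one wrapped candidate is healthy and already in the repaired part within the $k+2$ budget.
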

\begin{proof}
The cases in Table~\ref{tab:gcases} are mutually exclusive and exhaustive: a single fault is axis or off-axis; two faults are either both axis, one axis and one off-axis, or both off-axis; the both-axis case further splits into same ray, opposite ray, and orthogonal axes. Lemmas~\ref{lem:gaxis}, \ref{lem:goffmixed}, and \ref{lem:o6} provide the required certificates. Lemma~\ref{lem:cert} converts each certificate into a non-redundant repaired tree and combines the depth bound with exact $c-1$ repair-edge optimality.
\end{proof}

\begin{figure}[H]
\centering
\includegraphics[width=0.6\linewidth]{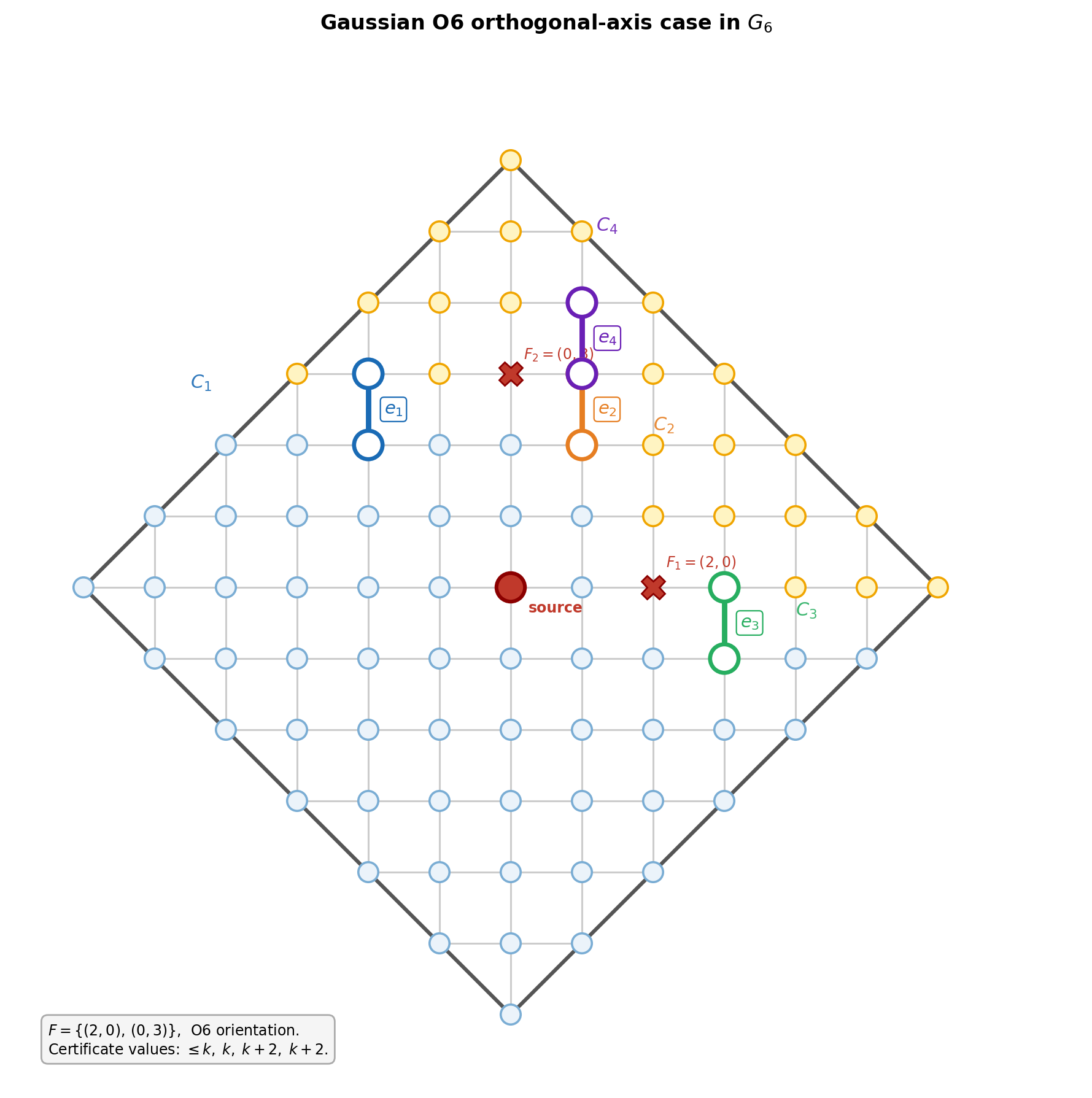}
\caption{Gaussian O6 orthogonal-axis case in $G_6$ with faults $F=\{(2,0),(0,3)\}$. The figure highlights the four repair edges $e_1$--$e_4$ from \eqref{eq:o6e1}--\eqref{eq:o6e4} and the resulting five-component structure.}
\label{fig:o6}
\end{figure}

\subsection{Gaussian Examples}

The following examples unpack the table entries rather than serving as proof. They show how a selected row determines the component order, the repaired-side depths, and the eccentricity terms in the certificate inequality.

\begin{example}[Orthogonal-axis repair in $G_6$]
Let $k=6$ and $F=\{(2,0),(0,3)\}$. Figure~\ref{fig:o6} shows the selected O6 orientation. The certificate edges are
\begin{align*}
    e_1&=\{(-2,2),(-2,3)\},& e_2&=\{(1,2),(1,3)\},\\
    e_3&=\{(3,-1),(3,0)\},& e_4&=\{(1,3),(1,4)\}.
\end{align*}
The first two edges attach row fragments, $e_3$ attaches the positive $x$-tail, and $e_4$ attaches the upper cap after the right row fragment is present. The certificate values are at most $6,6,8,8$, respectively; hence the final depth is at most $k+2=8$.
\end{example}

\begin{example}[Mixed diagonal-adjacent repair]
For $k=7$ and $F=\{(3,0),(2,-1)\}$, the off-axis suffix begins next to the axis fault. A naive lower-side entry reaches the suffix only after the path has moved around the axis cut, increasing the repaired-side depth by two layers. The diagonal patch uses the opposite transverse side. Its repaired-side endpoint has layer at most $4$, and the remaining suffix height is at most $4$, so the selected value is $4+1+4=9=k+2$.
\end{example}

\begin{example}[Same-ray Gaussian repair]
\label{ex:gsameray}
Let $k=5$ and $F=\{(1,0),(3,0)\}$. Table~\ref{tab:gselector} selects the same-ray row. The bounded interval component is the single vertex $(2,0)$, so the first list is nonempty. The outer tail begins at $(4,0)$ and contributes the second list. The interval entry has repaired-side depth at most $3$ and eccentricity $0$, while the tail entry has repaired-side depth at most $5$ and eccentricity $1$; both values are at most $7=k+2$. If the two faults were consecutive, the bounded interval would be empty and the first list would be omitted.
\end{example}

\begin{example}[Two off-axis same-row faults]
\label{ex:goffrow}
Take $k=6$ and $F=\{(2,1),(-1,1)\}$. The selector first applies the coordinate exchange $(x,y)\mapsto(y,x)$, producing $F'=\{(1,2),(1,-1)\}$. In the exchanged coordinates the pair is handled by the ordinary off-axis side-entry rule. The exchange preserves $\rho_G$ and maps generator edges to generator edges, so the two certificate values are unchanged after the selected edges are mapped back. The resulting depth is at most $k+1=7$.
\end{example}

\begin{example}[Boundary wraparound edge]
Let $k=5$ and $u=(5,0)$. The ordinary east neighbor of $u$ leaves the displayed diamond, but $\Gamma_G$ returns the unique $v\in B_5$ satisfying $\phi_G(v)-\phi_G(u)\equiv k\pmod{N_G}$. Thus a boundary repair edge is accepted by the same algebraic generator test as an interior edge.
\end{example}

\section{EJ Certificates}

The EJ proof uses the six cyclic directions $\delta_0,\ldots,\delta_5$. A vertex is ray-resident when it lies on one of the six boundary rays from the source; otherwise it is sector-interior. Sectors are indexed cyclically between adjacent rays.

\begin{figure}[H]
\centering
\includegraphics[width=0.6\linewidth]{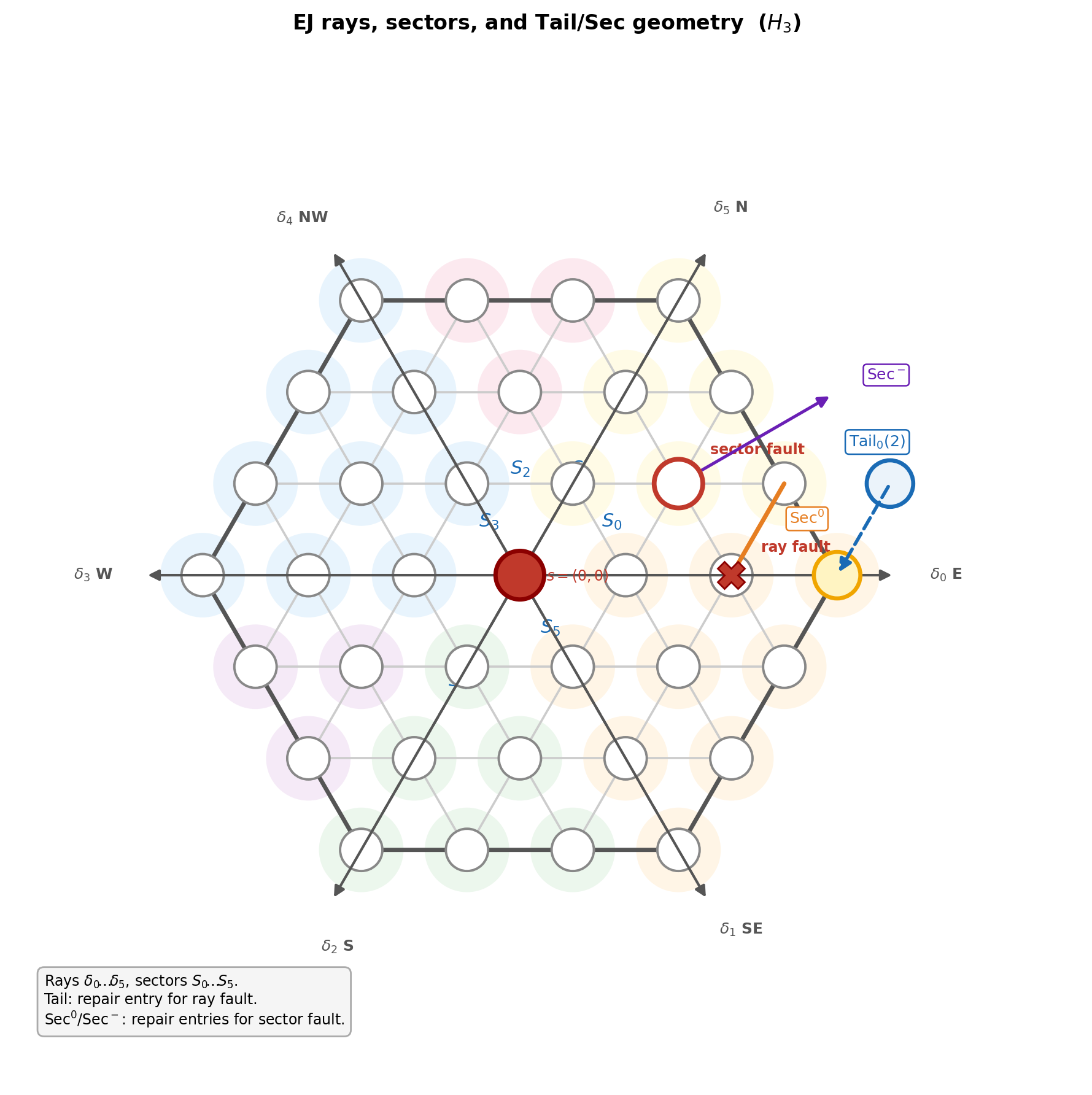}
\caption{EJ ray and sector geometry. The six rays $\delta_0,\ldots,\delta_5$ and six sectors $S_0,\ldots,S_5$ are shown on a small hexagonal ball. The marked examples illustrate a ray fault, a sector-interior fault, and the boundary-side attachments represented by the $\mathsf{Tail}$ and $\mathsf{Sec}$ operators.}
\label{fig:ejgeometry}
\end{figure}

\subsection{Explicit EJ Selector Operators}
\label{subsec:ej-operators}

The EJ selector can be written compactly using three constant-size edge-list operators. Indices are always modulo six. For a ray vertex $a\delta_i$, define the first healthy outward ray point
\begin{equation}
    q_i(a)=(a+1)\delta_i.
\end{equation}
If $a<t$, the ray-tail list is
\begin{equation}
\label{eq:ejraytail}
    \mathsf{Tail}_i(a)=\big(\{q_i(a)+\delta_{i-1},q_i(a)\},\{q_i(a)+\delta_{i+1},q_i(a)\}\big),
\end{equation}
and $\mathsf{Tail}_i(a)=\emptyset$ when $a=t$. The two possible repaired-side endpoints are the two transverse neighbors of the first tail vertex. The operator $\First_E$ selects the first healthy quotient-valid edge in the displayed order.

For a sector-interior vertex $f=u\delta_i+v\delta_{i+1}$ with $u,v>0$, define
\begin{align}
\label{eq:ejsectorop}
    \mathsf{Sec}^{0}_i(f)&=\{f+\delta_i, f+\delta_{i+1}\},\\
    \mathsf{Sec}^{-}_i(f)&=\{f+\delta_i, f+\delta_{i-1}\},\\
    \mathsf{Sec}^{+}_i(f)&=\{f+\delta_{i+1}, f+\delta_{i+2}\}.
\end{align}
The edge $\mathsf{Sec}^{0}_i(f)$ connects the two sibling first-layer vertices beyond the cut; the other two edges are the outer-side boundary alternatives. Let $h\in B_{i+1}$ mean that the other fault lies on the $(i+1)$-side boundary of the sector, and let $h\in B_i$ denote the reflected $i$-side condition. The ordered sector list is the following constant-size rule:
\begin{equation}
\label{eq:ejseclist}
\begin{array}{ll}
\mathsf{Sec}_i(f;h)=(\mathsf{Sec}^{-}_i,\mathsf{Sec}^{0}_i,\mathsf{Sec}^{+}_i), & h\in B_{i+1},\\
\mathsf{Sec}_i(f;h)=(\mathsf{Sec}^{+}_i,\mathsf{Sec}^{0}_i,\mathsf{Sec}^{-}_i), & h\in B_i,\\
\mathsf{Sec}_i(f;h)=(\mathsf{Sec}^{0}_i,\mathsf{Sec}^{-}_i,\mathsf{Sec}^{+}_i), & \text{otherwise},
\end{array}
\end{equation}
where the argument $(f)$ on each $\mathsf{Sec}$ entry is suppressed in \eqref{eq:ejseclist}. The first two lines choose a side away from a blocking fault; the third line is the one-fault or unblocked case. Each list has length at most three and is evaluated only by $\First_E$.

For two ray faults on the same ray $a\delta_i,b\delta_i$ with $a<b$, define
\begin{equation}
\label{eq:ejraypair}
    \mathsf{RayPair}_i(a,b)=\big(\mathsf{Tail}_i(a)\text{ if }a+1<b,\; \mathsf{Tail}_i(b)\text{ if }b<t\big).
\end{equation}
For opposite rays $a\delta_i$ and $b\delta_{i+3}$, use $\mathsf{Tail}_i(a)$ and $\mathsf{Tail}_{i+3}(b)$. For other ray pairs, sort the two ray indices cyclically and apply the same tail operator to each nonempty interval or outer tail; because only two faults are present, at most three such lists are generated.

\begin{table}[H]
\centering
\caption{Explicit normalized EJ selector formulas. Directions are $\delta_0=E,\delta_1=SE,\delta_2=S,\delta_3=W,\delta_4=NW,\delta_5=N$. Every row is mapped back by the inverse dihedral symmetry. Each list is evaluated by $\First_E$ and is omitted when the corresponding component is empty.}
\label{tab:ejselector}
\begin{adjustbox}{max width=\textwidth}
\begin{tabular}{p{0.17\linewidth}p{0.19\linewidth}p{0.52\linewidth}}
\toprule
Normalized faults & Orientation family & Ordered repair lists \\
\midrule
one ray fault $a\delta_i$ & $C_i$ or $R_i$ exposing a transverse side & $\mathsf{Tail}_i(a)$ from \eqref{eq:ejraytail} \\
one sector fault $f\in S_i$ & $C_i$ or $R_{i+1}$ & $\mathsf{Sec}_i(f;\emptyset)$ from \eqref{eq:ejseclist} \\
two same-ray faults $a\delta_i,b\delta_i$ & $A_i$ & $\mathsf{RayPair}_i(a,b)$ from \eqref{eq:ejraypair} \\
two opposite-ray faults $a\delta_i,b\delta_{i+3}$ & $A_i$ & $\mathsf{Tail}_i(a)$, then $\mathsf{Tail}_{i+3}(b)$, ordered by smaller layer \\
two other ray faults & cyclic priority on the smaller ray arc & Tail/interval lists obtained by splitting the cyclic ray arc at the two cuts; at most three $\mathsf{Tail}$ lists \\
ray/sector mixed $a\delta_j,f\in S_i$ & cyclic/reverse priority exposing the side of $S_i$ away from $\delta_j$ & $\mathsf{Tail}_j(a)$ and $\mathsf{Sec}_i(f;a\delta_j)$, ordered by increasing cut layer \\
two non-adjacent sector faults $f\in S_i,h\in S_j$ & cyclic/reverse priority exposing disjoint boundaries & $\mathsf{Sec}_i(f;h)$ and $\mathsf{Sec}_j(h;f)$; attach the suffix whose chosen boundary is not incident with the other sector first \\
two adjacent or same-sector faults $f,h$ & alternating priority around their shared boundary & outer-side $\mathsf{Sec}$ list first; if it is blocked, attach the unblocked suffix and then the shared-boundary sibling edge $\mathsf{Sec}^{0}$ \\
\bottomrule
\end{tabular}
\end{adjustbox}
\end{table}

\noindent\textit{How to read Table~\ref{tab:ejselector}.} The EJ rows use the same convention as the Gaussian rows: a normalized representative is selected by a dihedral symmetry, the ordered edge lists are evaluated by $\First_E$, and the inverse symmetry maps the chosen edges back. The operators $\mathsf{Tail}$, $\mathsf{RayPair}$, and $\mathsf{Sec}$ encode entire ray or sector fragments, but only one edge is selected for each nonempty fragment. Hence the selector returns exact component attachments, not a pool of alternatives. Figure~\ref{fig:ejgeometry} is the visual dictionary for these operators; Examples~\ref{ex:ejotherray} and~\ref{ex:ejadjsector} expand the two most compressed EJ rows.

\begin{table}[H]
\centering
\caption{EJ constant-time certificate table for $|F|\le2$. Cases are listed under the priority order used by the selector.}
\label{tab:ejcases}
\begin{adjustbox}{max width=\textwidth}
\begin{tabular}{p{0.21\linewidth}p{0.25\linewidth}p{0.23\linewidth}p{0.11\linewidth}p{0.11\linewidth}}
\toprule
Fault signature & Selected orientation family & Component structure & Max $c$ & Depth bound\\
\midrule
One ray fault & Cyclic/reverse priority exposing a transverse entry & One ray tail & $2$ & $t+1$\\
One sector fault & Sector side-entry priority & One sector suffix or none & $2$ & $t+1$\\
Two same-ray faults & Alternating opposite-pair priority & Bounded ray interval plus outer tail & $3$ & $t+2$\\
Two opposite-ray faults & Alternating priority separating two tails & Two independent tails & $3$ & $t+2$\\
Two other ray-pair faults & Cyclic priority selected by ray order & At most three ray fragments & $4$ & $t+2$\\
Ray/sector mixed & Ray-tail entry plus sector side-entry & Tail plus sector suffix & $3$ & $t+2$\\
Two sector faults, non-adjacent & Cyclic/reverse priority exposing separated boundaries & Two separated sector suffixes & $3$ & $t+2$\\
Two sector faults, adjacent or same & Alternating priority around shared boundary & Two sector suffixes or shared cap & $3$ & $t+2$\\
\bottomrule
\end{tabular}
\end{adjustbox}
\end{table}

\subsection{EJ Suffix Inequality}

\begin{lemma}[EJ suffix inequality]
\label{lem:ejsuffix}
Let $C$ be an EJ sector or ray suffix cut at layer $r$. If the component entry $b$ has $\ecc_C(b)\le t-r-1$ and the repaired-side endpoint $a$ has repaired depth at most $r+1$, then the attachment has value at most $t+1$. If $d(s,a)\le r+2$, then the attachment has value at most $t+2$.
\end{lemma}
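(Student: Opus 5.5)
The plan is to argue exactly as in Lemma~\ref{lem:gsuffix}: substitute the two hypotheses into the attachment value $d(s,a)+1+\ecc_C(b)$ from the certificate inequality \eqref{eq:certineq}. No structural facts about EJ sectors or rays are needed beyond the stated bounds.

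First, fix the entry edge $(a,b)$ with $a$ in the already repaired part and $b\in C$. Under the hypotheses $\ecc_C(b)\le t-r-1$ and $d(s,a)\le r+1$, the value is bounded by
\begin{equation*}
    d(s,a)+1+\ecc_C(b)\le (r+1)+1+(t-r-1)=t+1 .
\end{equation*}
The cut layer $r$ cancels, so the bound is uniform over all cut positions.

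Second, for the weaker repaired-side hypothesis $d(s,a)\le r+2$, I will read the statement as keeping the same eccentricity hypothesis $\ecc_C(b)\le t-r-1$. This is the only sensible reading, since otherwise no bound could follow. The same substitution then gives
\begin{equation*}
    d(s,a)+1+\ecc_C(b)\le (r+2)+1+(t-r-1)=t+2 .
\end{equation*}
In both cases the value is exactly the depth contribution in the sense of Lemma~\ref{lem:cert}. Every vertex of $C$ lies within $\ecc_C(b)$ of $b$ inside $C$, and $b$ has depth $d(s,a)+1$ after attachment.

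The only point needing care, though not a real obstacle, is that $d(s,a)$ must be the depth of $a$ in the repaired tree built so far, not its layer $\rho_E(a)$. This matters because $a$ may itself lie in a previously attached component. The lemma is phrased with exactly this repaired depth as the hypothesis, so the burden of verifying it falls on the later case lemmas that invoke it. Here it is used only as an assumed bound, so the proof is a two-line computation.
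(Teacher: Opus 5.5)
Your proposal is correct and matches the paper's proof, which consists of exactly the two substitutions $(r+1)+1+(t-r-1)=t+1$ and $(r+2)+1+(t-r-1)=t+2$ into the certificate value $d(s,a)+1+\ecc_C(b)$. Your reading that the eccentricity hypothesis $\ecc_C(b)\le t-r-1$ carries over to the second clause is also the one the paper's second computation implicitly uses.
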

\begin{proof}
The inequalities are
\begin{align}
    (r+1)+1+(t-r-1)&=t+1,\\
    (r+2)+1+(t-r-1)&=t+2.
\end{align}
\end{proof}

\begin{lemma}[EJ one-fault cases]
\label{lem:ejone}
Every one-fault placement in $H_t$ admits a selected certificate of depth at most $t+1$ and repair count $c-1\le1$.
\end{lemma}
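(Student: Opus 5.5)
By the dihedral normalization of Table~\ref{tab:symnorm}, I would reduce to two representatives: a ray fault $f=a\delta_0$ with $1\le a\le t$, and a sector-interior fault $f=u\delta_0+v\delta_1$ with $u,v>0$ and $r=\rho_E(f)=u+v\le t$. The reduction is legitimate because the dihedral group preserves $\rho_E$, the generator set $\{\pm\phi_E(\delta_i)\}$, and therefore $\Gamma_E$ and the certificate inequality \eqref{eq:certineq}. With a single fault, $T_\theta-F$ splits into the source component and the healthy children subtrees of $f$. The first task is to check that the selected orientation has the property that all healthy children of $f$ lie in one component $C$, so that $c\le2$. If $C=\emptyset$, the claim is immediate, since $c=1$ and no edge is used. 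Otherwise, Lemma~\ref{lem:cert} reduces the statement to exhibiting one crossing edge $(a,b)$ with $d(s,a)+1+\ecc_C(b)\le t+1$. I would obtain this from the first branch of Lemma~\ref{lem:ejsuffix}: show $\ecc_C(b)\le t-r-1$ and $d(s,a)\le r+1$ for the $\First_E$-selected edge.

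\textbf{Ray fault.} I would choose a priority orientation ($C_0$ or $R_0$) in which the only descendants of $a\delta_0$ are the outer ray tail $\{(u,0):a<u\le t\}$. Every off-ray vertex then takes its parent by first reducing its transverse coordinate toward a neighbouring ray, not toward $\delta_0$. The tail is a path, so with $b=(a+1)\delta_0$ we get $\ecc_C(b)=t-a-1$. The candidate endpoints $b+\delta_{5}$ and $b+\delta_1$ from $\mathsf{Tail}_0(a)$ lie at layer $a+1$ and are not descendants of $f$. Their parent chains reduce layer by one each step and avoid the ray $\delta_0$ beyond the origin, so they are healthy and in the source component with $d(s,\cdot)=a+1$. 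Lemma~\ref{lem:ejsuffix} with $r=a$ then gives $t+1$. At $a=t$ the tail is empty. The boundary case $a=t-1$, where a transverse candidate wraps, is handled by $\Gamma_E$: the quotient representative of $b+\delta_{\pm1}$ is some vertex of layer at most $t$, hence at depth at most $t$, and the tail is the single vertex $b$. The value is then at most $t+1$ again.

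\textbf{Sector fault.} I would take the sector side-entry orientation ($C_0$ or $R_1$) under which the descendants of $f$ form a monotone suffix. Its root children are $f+\delta_0$ and $f+\delta_1$, both of layer $r+1$, and these are joined by the sibling edge $\mathsf{Sec}^0_0$. Here I must argue that $C$ is connected. It consists of the two children and their subtrees, which are joined through that sibling edge only if the edge is in $T$. Otherwise the two children would be two components, and $c$ would be $3$, contradicting the claim. I therefore expect to show that the orientation makes only one of them a child of $f$; for instance, the parent of $f+\delta_1$ is $f+\delta_1-\delta_0$, not $f$. Then $C$ is the suffix of points dominating $f$ in the sector order, and every vertex of $C$ has layer in $[r+1,t]$. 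I would bound $\ecc_C(b)\le t-r-1$ by showing that every vertex of $C$ is reachable from $b$ by a path inside $C$ that increases the layer at every step. For the attachment, $\First_E$ on $(\mathsf{Sec}^0,\mathsf{Sec}^-,\mathsf{Sec}^+)$ returns a healthy endpoint outside $C$ of layer $r+1$, which is the remaining sibling or side neighbour, and whose parent chain avoids $f$. That endpoint has depth $r+1$, and Lemma~\ref{lem:ejsuffix} again gives $t+1$.

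\textbf{Main obstacle.} The hardest step will be the sector case: pinning down the orientation so that the detached set is a single region closed under layer-increasing moves, which gives both $c\le2$ and the eccentricity bound. The boundary wraparound of the suffix must also be controlled. I would first try the explicit parent rule ``reduce the $\delta_1$-coefficient first, then the $\delta_0$-coefficient'' inside $S_0$, and verify the closure property directly from axial coordinates.
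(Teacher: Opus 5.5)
Your plan follows the paper's proof. You choose an orientation in which the fault has at most one child, so $c\le2$. You then apply the first branch of Lemma~\ref{lem:ejsuffix} with cut layer $r$ and an entry endpoint of depth $r+1$. You are more explicit than the paper about the single-child requirement, and your sector case goes through. Under either sector rule, exactly one of $f+\delta_0$, $f+\delta_1$ is a child of $f$; the other has parent $f+\delta_2$ or $f+\delta_5$. So the remaining sibling has depth $r+1$, and $\mathsf{Sec}^0_0(f)$ gives value $t+1$. One small correction: $C$ is then the single path $\{f+j\delta_0\}$ or $\{f+j\delta_1\}$, not the whole region dominating $f$. Your layer-monotone eccentricity bound does not depend on this.

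\textbf{The ray case fails as written.} The $\mathsf{Tail}_0(a)$ endpoints are $b+\delta_5=(a+1,1)$ and $b+\delta_1=(a+2,-1)$, and both have $\rho_E=a+2$, not $a+1$. Under your orientation they lie in the source component at depth exactly $a+2$. The attachment value is therefore $(a+2)+1+(t-a-1)=t+2$, which misses the one-fault bound $t+1$ whenever $a\le t-2$. For example, with $t=3$ and $f=(1,0)$, entering $(2,0)$ from $(3,-1)$ puts $(3,0)$ at depth $5$. Only your wraparound case $a=t-1$ survives.

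\textbf{The fix} is to enter through the same-layer neighbours of $b$, namely $b+\delta_2=(a+1,-1)\in S_0$ and $b+\delta_4=(a,1)\in S_5$. Under your rule these drain to the $\delta_1$ and $\delta_5$ rays without meeting $f$. They therefore have depth $a+1$ and give value exactly $t+1$. Since they lie in $H_t$, no quotient wraparound is needed. This is what the paper's proof tacitly uses when it asserts entry depth $r+1$; its own same-ray example enters $(3,0)$ from $(2,1)=(3,0)+\delta_4$.

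Also be careful with the labels $C_0$, $R_0$. Any rotation-invariant sector rule makes one of $S_0$, $S_5$ drain into the $\delta_0$ ray. That gives $a\delta_0$ a second child, $(a,1)$ or $(a+1,-1)$, and hence $c=3$. The orientation you describe in words, with both adjacent sectors draining away from $\delta_0$, is the one that isolates the ray tail, and it is necessarily non-uniform.
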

\begin{proof}
If the fault lies on a ray at layer $r$, the selected cyclic or reversed priority exposes a transverse entry to the possible ray tail. The entry endpoint has depth at most $r+1$ and the remaining tail eccentricity is at most $t-r-1$, so Lemma~\ref{lem:ejsuffix} gives $t+1$. If the fault is sector-interior, the selected priority exposes one of the two sector boundaries as an entry. The descendant region is a sector suffix with the same eccentricity bound. If the deletion does not detach a non-source component, then $c=1$ and no repair edge is needed; otherwise $c=2$ and one edge suffices.
\end{proof}

\begin{lemma}[EJ ray-pair cases]
\label{lem:ejray}
Every two-fault pair in which both faults are ray-resident admits a selected $t+2$ certificate with at most three repair edges.
\end{lemma}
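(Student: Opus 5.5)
The plan is to follow the pattern of Lemma~\ref{lem:gaxis}: normalize by a dihedral symmetry (Table~\ref{tab:symnorm}) so that the first fault lies on $\delta_0$. Then split the ray-pair placements into three exhaustive classes: same ray, opposite rays ($\delta_i,\delta_{i+3}$), and other ray pairs (adjacent rays $\delta_i,\delta_{i\pm1}$, or rays two apart $\delta_i,\delta_{i\pm2}$). For each class, I will identify the detached components of the selected orientation tree. Each component gets one $\mathsf{Tail}$ list from \eqref{eq:ejraytail}, and I will check inequality \eqref{eq:certineq} with the second bound of Lemma~\ref{lem:ejsuffix}. Lemma~\ref{lem:cert} then converts the certificates into a non-redundant tree using exactly $c-1$ edges. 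Since $c\le4$ in every class, this is at most three repair edges.

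\emph{Same ray.} For faults $a\delta_0,b\delta_0$ with $a<b$, I use the $A_0$ orientation. In it, the only ray-resident vertices whose parent chains pass through a fault are the interval $I_{a,b}$ and the tail beyond $b$; off-ray vertices reduce their transverse coordinate first. So at most two non-source components arise, and $c\le3$. The interval entry $q_0(a)$ has a transverse neighbour $q_0(a)+\delta_{\pm1}$ at layer at most $a+2$. Its chain avoids both faults, because it leaves the ray immediately. The interval eccentricity is at most $b-a-2$, so its value is at most $b+1\le t+1$. The tail entry similarly has value at most $(b+2)+1+(t-b-1)=t+2$.

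\emph{Opposite rays.} The two tails lie on opposite sides of the source. The argument is the EJ analogue of the opposite-ray Gaussian case: a transverse endpoint of $q_0(a)$ has a parent chain confined to the half-plane of $\delta_0$ together with the source. Hence it cannot meet $b\delta_3$, and symmetrically for the other tail. This gives $c\le3$.

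\emph{Other ray pairs.} This is the class I expect to be the main obstacle. Here the two rays bound a common sector, or a pair of sectors separated by one ray. The danger is that a transverse endpoint of one tail lies in a sector whose boundary ray carries the other fault. Its parent chain could then run through that fault, or the endpoint could itself be the other fault when the faults are near the source. I would handle this through the ordering in $\mathsf{Tail}_i$ together with $\First_E$. Of the two transverse endpoints of $q_i(a)$, one lies in the sector away from $\delta_j$. Under the cyclic priority chosen on the smaller arc, its chain reduces toward $\delta_{i}$ or the source without crossing ray $\delta_j$, so at least one candidate is always healthy and unblocked. The cyclic priority may let one fault's subtree absorb part of a sector, so the fault-pruned forest can have three ray fragments (two tails plus possibly one interval-type fragment); thus $c\le4$.

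For each fragment, the entry has layer $r+1$ and the repaired-side endpoint has depth at most $r+2$. The depth $r+2$ allows one detour when the preferred endpoint is blocked, in which case the alternative endpoint lies in an already-attached fragment of depth at most $r+2$. The eccentricity is at most $t-r-1$, so Lemma~\ref{lem:ejsuffix} gives $t+2$. Each edge enters a distinct fragment, so no component receives two edges. Boundary wraparound is handled uniformly by $\Gamma_E$. Verifying the blocking analysis, especially for small $a,b$ near the source and at layer $t$, is the delicate step. I would carry it out by listing the at most four candidate endpoints and their parent chains explicitly in the normalized representative.
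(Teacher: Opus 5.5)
Your case split (same ray, opposite rays, other ray pairs), one $\mathsf{Tail}$ list per nonempty interval or tail, values from Lemma~\ref{lem:ejsuffix} with repaired-side depth $r+2$, and the counts $c\le3,3,4$ all follow the paper's proof. Your same-ray bookkeeping (endpoint layer $a+2$, interval eccentricity $b-a-2$) is in fact the one consistent with \eqref{eq:ejraytail}; the paper writes $a+1$ and $b-a-1$ and gets the same total $b+1$.

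The step that fails as written is your orientation rule, which is reversed. You say off-ray vertices ``reduce their transverse coordinate first'', and for other ray pairs that the endpoint's chain ``reduces toward $\delta_i$''. The $\mathsf{Tail}_0(a)$ endpoints are $q_0(a)+\delta_5=(a+1,1)=(a+1)\delta_0+\delta_5$ and $q_0(a)+\delta_1=(a+2,-1)=(a+1)\delta_0+\delta_1$. If the transverse ($\delta_5$, resp.\ $\delta_1$) coefficient is reduced first, both become children of $q_0(a)=(a+1,0)$. They then lie inside the fragment being repaired, so the listed edges are tree edges, not external ones. Worse, every ray vertex $u\delta_0$ acquires the off-ray children $u\delta_0+\delta_5$ and $u\delta_0+\delta_1$, so a single fault on $\delta_0$ detaches up to three subtrees and $c\le3$ is lost.

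The argument needs the opposite rule. This is what the paper's ``exposes the two transverse neighbors of the ray before continuing along the damaged trunk'' and the $x$-priority chains of Lemma~\ref{lem:gaxis} encode. Off-ray vertices next to the damaged ray reduce their $\delta_0$-coefficient first. This gives the chains $(a+1,1)\to(a,1)\to\cdots\to(0,1)\to s$ and $(a+2,-1)\to(a+1,-1)\to\cdots\to(1,-1)\to s$, which avoid $\delta_0$ and have depth exactly $a+2$. For other ray pairs the correct requirement is, as in the paper, that the chain stay off $\delta_i$ until it is below the cut layer.

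Your ``one detour'' clause also does not deliver the bound $r+2$. Suppose a fragment's root, at layer $r'+1$, is entered from a $\mathsf{Tail}$ endpoint of depth $r'+2$. Then every vertex of that fragment has depth equal to its layer plus two. So an alternative endpoint of layer $r+2$ inside an already-attached fragment has depth $r+4$, and the computation in Lemma~\ref{lem:ejsuffix} yields only $t+4$.

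In the paper, $r+2$ rests on the chosen entry being unblocked, i.e.\ lying in the source component with its original chain. Your own claim that at least one of the two candidates is healthy and unblocked supplies exactly this. So drop the detour clause and argue from that candidate. Needing only one of the two candidates also lets your argument survive wraparound at layer $t$. There the paper's ``nonzero sector coefficient versus pure ray vertex'' reasoning does not literally apply: in $H_2$ the endpoint $(2,1)$ of $\mathsf{Tail}_0(1)$ has quotient representative $(0,-2)=2\delta_2$, which is a ray vertex.
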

\begin{proof}
Let the six rays be ordered cyclically. Dihedral symmetry lets us normalize one fault to a positive layer on $\delta_i$.

If both faults lie on the same ray, order their layers $a<b$. The selected alternating priority exposes the two transverse neighbors of the ray before continuing along the damaged trunk. The bounded interval between the cuts, when nonempty, has entry depth at most $a+1$ and eccentricity at most $b-a-1$, giving value at most $b+1\le t+1$. The outer tail has entry depth at most $b+1$ and eccentricity at most $t-b-1$ unless it is a boundary microtail; the ordinary value is at most $t+1$, and the boundary microtail has eccentricity at most one and therefore value at most $t+2$. Thus $c\le3$.

If the faults lie on opposite rays, the same alternating orientation gives two tails whose transverse entries lie in disjoint half-hexagons. A transverse endpoint of the $\delta_i$ tail has a nonzero coefficient in one of the adjacent directions $\delta_{i-1}$ or $\delta_{i+1}$, whereas the opposite fault lies on the pure ray $\delta_{i+3}$. Hence the endpoint cannot equal the opposite fault. Its parent chain under the selected alternating priority reduces the transverse coefficient before it can cross the source into the opposite ray, so the chain cannot be blocked by the opposite-ray fault. The reflected statement holds for the other tail. Applying Lemma~\ref{lem:ejsuffix} to the two tails gives value at most $t+2$ and $c\le3$.

Now suppose the ray faults lie on two distinct non-opposite rays. The cyclic order of the two cut rays partitions the selected ray skeleton into at most three fragments: the interval on the shorter cyclic arc between the two cut layers, and up to two outer tails outside that interval. No fourth fragment can appear because two deleted ray vertices create at most three intervals in a one-dimensional cyclically ordered ray skeleton. For each nonempty fragment, the selector uses a transverse entry produced by the corresponding $\mathsf{Tail}$ list. The entry of the third fragment, when it exists, is not blocked by either fault: it contains a nonzero coefficient in an adjacent sector direction, while both faults are pure ray vertices, and its parent chain stays inside the sector incident with that fragment until it reaches a layer below the cut. Therefore the entry cannot pass through either deleted ray vertex. Attaching the fragments in increasing entry layer gives repaired-side depth at most $r+2$ for every cut layer $r$. Lemma~\ref{lem:ejsuffix} gives value at most $t+2$. Hence $c\le4$ and the useful repair count is at most three.
\end{proof}

\begin{lemma}[EJ sector and mixed cases]
\label{lem:ejsector}
Every ray/sector mixed pair and every two-sector pair admits a selected $t+2$ certificate with the component bounds in Table~\ref{tab:ejcases}.
\end{lemma}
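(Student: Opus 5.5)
The plan is to follow the same pattern as Lemmas~\ref{lem:goffmixed} and~\ref{lem:ejray}. Normalize by a dihedral symmetry (Table~\ref{tab:symnorm}), list the detached components of the selected orientation, exhibit a healthy entry edge for each of them, and bound each attachment value with Lemma~\ref{lem:ejsuffix}. Lemma~\ref{lem:cert} then turns the ordered edges into a non-redundant tree with exactly $c-1$ external edges. Since only healthy children of the two faulty vertices can root detached components, and the selected priorities give each faulty vertex at most one detached descendant region (a ray tail or a sector suffix), we get $c\le 3$ in every case. So the work reduces to showing, for each row of Table~\ref{tab:ejselector}, that some entry in the ordered $\First_E$ list is healthy, lies in an already repaired component, and has repaired depth at most $r+2$ for the relevant cut layer $r$.

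\emph{Mixed case.} Take a ray fault $a\delta_j$ and a sector fault $f=u\delta_i+v\delta_{i+1}\in S_i$, normalized so that $j\in\{i,i+1\}$ or $\delta_j$ is not incident with $S_i$.
\begin{itemize}
\item If $\delta_j$ is not a boundary ray of $S_i$, the $\mathsf{Sec}$ endpoints and their parent chains stay inside $\overline{S_i}$. They have nonzero coefficients in both $\delta_i$ and $\delta_{i+1}$ until they reach the boundary below layer $\rho_E(f)$, so they cannot meet $a\delta_j$. The $\mathsf{Tail}_j$ endpoints have a nonzero coefficient in $\delta_{j\pm1}$, so they cannot equal $f$ unless $f$ is one of the two transverse neighbours. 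In that subcase the other transverse neighbour is used.
\item If $\delta_j$ bounds $S_i$ (say $h=a\delta_j\in B_{i+1}$), rule~\eqref{eq:ejseclist} puts $\mathsf{Sec}^-_i$ first. Its endpoint lies on the far side, away from $\delta_{i+1}$, and its parent chain under the reverse priority exposing that side avoids $\delta_{i+1}$. The tail is attached by its $\mathsf{Tail}_j$ endpoint on the side of $\delta_j$ not facing $S_i$.
\end{itemize}
Each entry has repaired depth at most $r+1$, or $r+2$ if the tail must wait until the sector suffix is attached (this is why attachments are ordered by increasing cut layer). The suffix eccentricity is at most $t-r-1$, so Lemma~\ref{lem:ejsuffix} gives $t+2$.

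\emph{Non-adjacent sectors.} Take $f\in S_i$ and $h\in S_j$ with $\overline{S_i}\cap\overline{S_j}=\{s\}$ apart from possibly a shared ray. The chosen boundaries are disjoint, and each side-entry chain stays in its own closed sector. So neither fault blocks the other's entry, and both suffixes attach with value at most $t+1$. If the sectors share a boundary ray, we attach first the suffix whose chosen boundary is not that ray. The second suffix then enters from the opposite boundary, with depth at most $r+2$, giving $t+2$.

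\emph{Adjacent or same-sector pair (expected main obstacle).} Here the faults can lie on each other's side-entry chains, or one can be an ancestor of the other, and the two suffixes may merge into a shared cap. Take $f=u\delta_i+v\delta_{i+1}$ and $h$, split by the relative position of $h$ with respect to $f+\delta_i$, $f+\delta_{i+1}$, and $f+\delta_{i\pm1}$; this is a constant number of local patterns.
\begin{itemize}
\item Away from those local positions, the outer-side list succeeds with value at most $t+1$.
\item If $h$ blocks the outer side, the alternating priority around the shared boundary makes the other suffix attachable first. The blocked suffix is then reached through the sibling edge $\mathsf{Sec}^0$ from the now-repaired sibling. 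The sibling has depth at most $r+2$, and the remaining cap eccentricity is at most $t-r-1$, giving $t+2$.
\item If $h$ is a descendant of $f$, the inner fragment terminates before the outer cut, exactly as in the Gaussian ancestor case, so $c\le 3$ still holds.
\end{itemize}
The hard part is verifying that each blocked pattern leaves at least one $\First_E$ candidate healthy and depth-feasible. I would check this by writing out the finitely many positions of $h$ within hexagonal distance two of $f$ in normalized coordinates. Wraparound at layer $t$ is handled through $\Gamma_E$, where boundary microcomponents have eccentricity at most one and are attached from depth at most $t+1$.
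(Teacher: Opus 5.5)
Your mixed and non-adjacent cases follow the paper's proof closely. You reorder the $\mathsf{Sec}$ list away from a ray fault lying on a boundary side of $S_i$, you confine side-entry parent chains to their own closed sector, and you apply Lemma~\ref{lem:ejsuffix} with repaired depth $r+1$, or $r+2$ when one earlier repair edge intervenes. (The ``shared boundary ray'' subcase you add under non-adjacent sectors is vacuous, since non-adjacent sectors have no common ray.) The problem is the adjacent/same-sector case. You defer it explicitly, and it is exactly where the paper's proof does its one non-routine step.

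In that case you assert that the blocked suffix is reached through $\mathsf{Sec}^0$ ``from the now-repaired sibling''. Nothing in your argument shows that the repaired-side endpoint of $\mathsf{Sec}^0$ lies in the source component or in the first-attached suffix. It could instead lie in the still-detached second suffix itself; then the edge is not component-crossing and Lemma~\ref{lem:cert} does not apply. The paper closes this with a structural fact about the alternating priority. The shared-boundary first-layer vertices adjacent to the two cuts belong to the source component or to the first-attached suffix, because the second suffix is cut off by the opposite parent gate across the shared boundary. The same fact gives the depth bound $r+2$: one step to the shared boundary plus at most one earlier repair edge.

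Your proposed substitute, enumerating positions of $h$ within hexagonal distance two of $f$, cannot close this gap. As you note yourself, $h$ obstructs an entry endpoint $e$ whenever it lies anywhere on the parent chain of $e$. That chain has length $\rho_E(e)$, so the obstructing positions are not confined to a bounded neighbourhood of $f$. For the same reason, your claim that ``away from those local positions the outer-side list succeeds with value at most $t+1$'' is unsupported.

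Relatedly, your reduction only asks that \emph{some} entry of each list be healthy and already repaired. But $\First_E$ returns the \emph{first} entry with healthy, quotient-adjacent endpoints and does not test component membership. You must therefore show that this first entry is the valid one, which is why the paper argues about the entry chosen first after reordering.
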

\begin{proof}
For a ray/sector mixed pair, let the sector fault lie in the sector bounded by $\delta_i$ and $\delta_{i+1}$. The sector suffix has two boundary-side entries. If the ray fault is not on one of these two boundary sides, either entry is healthy and Lemma~\ref{lem:ejsuffix} gives value at most $t+1$. If the ray fault lies on the $\delta_i$ side, the selector orders the $\mathsf{Sec}$ list to use the $\delta_{i+1}$ side first; if it lies on the $\delta_{i+1}$ side, the reflected order is used. Thus the sector entry chosen first is not the ray fault. The ray tail is attached through its transverse $\mathsf{Tail}$ entry. If this tail is attached before the sector suffix, the sector repaired-side endpoint may acquire one additional repair edge in its path, but its depth is still at most $r+2$ for sector cut layer $r$. Lemma~\ref{lem:ejsuffix} then gives $t+2$. Only one ray tail and one sector suffix can be nonempty in the selected orientation, so $c\le3$.

For two sector faults in non-adjacent sectors, choose the cyclic or reversed priority exposing side boundaries separated by at least one full sector. A unit EJ step crosses at most one sector boundary. Therefore a side-entry endpoint for one sector cannot be the other fault, and the associated crossing edge cannot be incident with that fault. The parent chain of that endpoint remains in the incident sector until it reaches a layer below the corresponding cut, so it cannot pass through the other sector fault. Each suffix is thus attached either directly from the source component or after the other suffix has been attached; the repaired-side depth is at most $r+2$, and Lemma~\ref{lem:ejsuffix} gives $t+2$. Hence $c\le3$.

For adjacent or same-sector faults, let the shared boundary be the ray between the two relevant sector sides. The selected alternating priority exposes the two outer boundaries and the shared-boundary sibling edge. If the two outer entries are healthy, both suffixes have value at most $t+1$. If one outer entry is blocked by the other fault, the selector attaches the suffix with the unblocked outer entry first. We now verify that the second shared-boundary attachment is legitimate. Under the alternating priority, the shared-boundary first-layer vertices adjacent to the two cuts belong either to the source component or to the suffix that was just attached first; they cannot belong to the still-unrepaired second suffix, because the second suffix is defined by the opposite parent gate across the shared boundary. Thus, after the first attachment, the shared-boundary endpoint used as the repaired-side endpoint for the second suffix is in the repaired tree. Its depth is at most the cut layer plus two: one step to the shared boundary and at most one previous repair edge. Lemma~\ref{lem:ejsuffix} gives the $t+2$ bound for the second suffix. Empty suffixes and one-component cases only reduce $c$, so the repair count is exactly $c-1$.
\end{proof}

\begin{theorem}[EJ constant-time repair]
\label{thm:ej}
For every $t\ge2$ and every source-free fault set $F\subseteq H_t$ with $|F|\le2$, the EJ selector returns a coordinate-reduction orientation and exactly $c-1$ external component-crossing repair edges for that selected orientation. If $|F|=1$, the repaired depth is at most $t+1$; if $|F|=2$, it is at most $t+2$.
\end{theorem}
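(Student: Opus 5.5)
The plan mirrors the proof of Theorem~\ref{thm:gaussian}: an exhaustive case split, the per-case lemmas, then Lemma~\ref{lem:cert} to assemble the result.

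\emph{Step 1: exhaustiveness of Table~\ref{tab:ejcases}.} First I check that its rows are mutually exclusive and cover every source-free $F$ with $|F|\le 2$. Every non-source vertex of $H_t$ is either ray-resident or sector-interior. A single fault is therefore a ray fault or a sector fault. For two faults, I split by ray/sector type:
\begin{itemize}
\item two ray faults lie on the same ray, on opposite rays, or on two other rays;
\item a mixed pair consists of one ray fault and one sector fault;
\item two sector faults lie in non-adjacent sectors, or in adjacent or identical sectors.
\end{itemize}
The dihedral normalization of Table~\ref{tab:symnorm} reduces each placement to a canonical representative. Invariance of $\rho_E$, of $\Gamma_E$-adjacency and of inequality~\eqref{eq:certineq} under this symmetry group transfers each certificate back to the original placement.

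\emph{Step 2: apply the case lemmas.} Lemma~\ref{lem:ejone} gives every one-fault case a certificate of depth $t+1$ with $c\le 2$. Lemma~\ref{lem:ejray} handles all three ray-pair rows with depth $t+2$. Lemma~\ref{lem:ejsector} handles the mixed row and both two-sector rows with depth $t+2$. In each case the selected orientation is one of the coordinate-reduction families $C_i$, $R_i$ or $A_i$. Each such family assigns a parent of layer one less, so it satisfies the Definition of a coordinate-reduction orientation.

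\emph{Step 3: exactness of the repair count.} The main obstacle is showing that the selector's output contains exactly one edge per nonempty non-source component, no more and no fewer. The table rows emit one list per potential child-root family of a faulty vertex, i.e.\ per healthy child of a fault in $T_\theta$. I will argue three things:
\begin{itemize}
\item every detached component is headed by such a child, so each component has a list;
\item distinct lists target distinct components, because the ray intervals and tails, and the sector suffixes, are disjoint fragments, as established inside Lemmas~\ref{lem:ejray} and~\ref{lem:ejsector};
\item lists for empty components are omitted by convention.
\end{itemize}
Together these give exactly $c-1$ edges. The remaining point is that $\First_E$ always returns some edge for a nonempty component, and that its repaired-side endpoint already lies in the repaired part at the time of attachment. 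Both facts are the non-blocking arguments of the lemmas, for example the shared-boundary argument in the adjacent-sector case. With the ordering fixed as in the table, this yields a $K$-depth certificate with $K=t+1$ or $K=t+2$.

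\emph{Step 4: conclude.} Lemma~\ref{lem:cert} then gives a non-redundant, acyclic spanning tree of the healthy subgraph with depth at most $K$. Lemma~\ref{lem:lower} shows that $c-1$ external edges is minimum for the selected orientation.
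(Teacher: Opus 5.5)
Your proposal is correct and follows the same route as the paper. The paper checks that the case partition in Table~\ref{tab:ejcases} is exhaustive, takes the certificates from Lemmas~\ref{lem:ejone}, \ref{lem:ejray} and~\ref{lem:ejsector}, and then uses Lemma~\ref{lem:cert} (with Lemma~\ref{lem:lower}) for non-redundancy, the depth bounds and minimality of the $c-1$ count; your Step~3 accounting (each non-source component of $T_\theta-F$ is rooted at a healthy child of a fault, and each such child receives exactly one list) just makes explicit what the paper leaves to its selector-primitives paragraph and the case lemmas.
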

\begin{proof}
The cases in Table~\ref{tab:ejcases} are exhaustive under the selector priority order: a single fault is ray or sector; two faults are ray/ray, ray/sector, or sector/sector; ray/ray pairs split into same, opposite, and other ray pairs; sector/sector pairs split by cyclic adjacency. Lemmas~\ref{lem:ejone}, \ref{lem:ejray}, and \ref{lem:ejsector} provide the corresponding depth certificates. Lemma~\ref{lem:cert} gives non-redundancy and exact $c-1$ external repair count.
\end{proof}

\subsection{EJ Examples}

The EJ examples below explain how the ray and sector operators are interpreted after dihedral normalization. The proof of correctness remains in Lemmas~\ref{lem:ejone}--\ref{lem:ejsector}.

\begin{example}[Same-ray EJ repair]
Let $t=3$ and $F=\{(1,0),(2,0)\}$ on the positive $\delta_0$ ray. The component between the two cuts is empty, and the outer boundary vertex $(3,0)$ is the only detached tail component. A transverse healthy neighbor such as $(2,1)$, or the corresponding quotient-equivalent boundary neighbor, attaches this component. The repaired-side depth is at most $3$, the component eccentricity is $0$, and the value is at most $4=t+1\le t+2$.
\end{example}

\begin{example}[Other ray-pair EJ repair]
\label{ex:ejotherray}
Let $t=4$ and place two ray faults at $2\delta_0$ and $2\delta_2$. These rays are distinct and non-opposite. The selected cyclic orientation splits the ray skeleton into at most three fragments: the short cyclic interval between the cut rays and up to two outer tails. Each fragment is attached by the first healthy edge from its $\mathsf{Tail}$ list. The entries lie in adjacent sectors rather than on the pure cut rays, so they cannot equal either fault. Attaching in increasing entry layer gives repaired-side depth at most $r+2$ for each cut layer $r$, and Lemma~\ref{lem:ejsuffix} gives depth at most $t+2=6$.
\end{example}

\begin{example}[Adjacent-sector EJ repair]
\label{ex:ejadjsector}
Let $t=5$ and consider two faults in sectors sharing the $\delta_0$ boundary. The selected alternating orientation exposes two outer-side entries and the shared-boundary sibling entry. Suppose the lower-sector outer entry is healthy. The selector attaches that suffix first. After this attachment, the shared-boundary first-layer endpoint is in the repaired tree: it is either in the source component or in the suffix just attached, and it cannot belong to the still-unrepaired suffix because that suffix uses the opposite parent gate across the shared boundary. The second suffix then attaches through the shared-boundary edge. If the second cut is at layer $r$, the repaired-side endpoint has depth at most $r+2$, and the remaining suffix height is at most $t-r-1$, giving value at most $t+2=7$.
\end{example}

\begin{example}[Boundary wraparound edge in $H_3$]
Let $t=3$ and $u=3\delta_0$. An outward axial step leaves the displayed hexagon, but the quotient-neighbor map $\Gamma_E$ returns the unique canonical representative $v\in H_3$ whose label satisfies $\phi_E(v)-\phi_E(u)\equiv \phi_E(\delta_0)\pmod{N_E}$. Thus a boundary repair edge is accepted by the same algebraic generator test as an interior edge.
\end{example}

\section{Complexity}

\begin{theorem}[Repair-plan complexity]
For $|F|\le2$, the Gaussian and EJ certificate selectors run in $O(1)$ time and use $O(1)$ additional memory. Materializing the full repaired broadcast tree takes $\Theta(N)$ time and memory if all parent assignments are written.
\end{theorem}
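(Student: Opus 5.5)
The plan is to split the statement into its two parts, the $O(1)$ selector bound and the $\Theta(N)$ materialization bound, and to handle the two network families in parallel. For the selector, I will walk through the family-specific execution described after Algorithm~\ref{alg:general} and bound each step by a constant number of word operations on integers of size $O(\log N)$.

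\emph{Classification and normalization.} The fault set has at most two elements. Classification therefore uses a fixed number of tests:
\begin{itemize}
\item in $G_k$, tests of the form $xy=0$, sign tests, and equality of rows or rays;
\item in $H_t$, ray or sector membership, which is decided by comparing $|x|,|y|,|x+y|$ and their signs, together with the cyclic difference of sector indices modulo six.
\end{itemize}
Normalization by Table~\ref{tab:symnorm} applies one reflection, swap, or dihedral map to at most two coordinate pairs. Each such map is a fixed linear map on $\mathbb Z^2$, so this step is also $O(1)$.

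\emph{Row lookup and evaluation.} The selected row of Table~\ref{tab:gselector} or Table~\ref{tab:ejselector} specifies at most a constant number of ordered lists. There are at most four lists in the O6 row and at most three $\mathsf{Tail}$ lists or two $\mathsf{Sec}$ lists in the EJ rows, and each list has length at most three. Each entry is a coordinate expression computed with $O(1)$ additions. $\First_{\mathcal N}$ then checks the following for each candidate edge:
\begin{itemize}
\item membership in $F$, which is at most two comparisons;
\item quotient adjacency, which means computing $\phi_G$ or $\phi_E$ via \eqref{eq:phiG} or \eqref{eq:phiE} and comparing the label difference modulo $N$ against the four or six generator residues.
\end{itemize}
The emptiness test for each component is a single comparison, such as $a<t$, $a+1<b$, or $\rho<k$. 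Finally, the inverse symmetry is applied to a constant number of edges.

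Summing these gives $O(1)$ time. The memory is likewise $O(1)$, since the algorithm stores only $F$, the orientation name, and at most four edges. I will note explicitly that the unit-cost word-RAM model is assumed. Without it, each arithmetic step costs $O(\log N)$ bit operations.

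\emph{Materialization.} For the upper bound, writing the parent map means evaluating the selected orientation rule $p_\theta(v)$ for every healthy $v$, which is $O(1)$ per vertex. I then overwrite the parent of each certificate entry $b_j$ with $a_j$. By Lemma~\ref{lem:cert} the result is the repaired tree, giving $O(N)$ time and memory. The lower bound is immediate: the output has $N-|F|-1=\Omega(N)$ parent entries, each of which must be written.

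The main obstacle is justifying that the materialized map really equals the certified tree. Two points need care here. First, reassigning the parent of $b_j$ must correspond to adding the crossing edge $(a_j,b_j)$. This requires rerooting $C_j$ at $b_j$, that is, reversing parent pointers along the path from $b_j$ to the old root of $C_j$. That path has length $O(k)$ or $O(t)$, so it stays within the $\Theta(N)$ budget but not within $O(1)$. I would state explicitly that rerooting is part of materialization, not of plan selection. Second, I must check that no step of the selector hides a scan. Every list is bounded by the tables, and ``component nonempty'' is decided by a closed-form inequality on fault coordinates rather than by traversal. I would verify this row by row against Tables~\ref{tab:gcases} and~\ref{tab:ejcases}.
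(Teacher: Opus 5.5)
Your proposal is correct and follows the same argument as the paper. A constant-size case partition with a bounded number of arithmetic, sign, and modular-adjacency tests, and at most four returned edges, gives the $O(1)$ selector bound; one output record per healthy non-source vertex gives the $\Theta(N)$ materialization bound. Your explicit $O(N)$ upper bound supplies a direction the paper's proof only asserts: evaluate the orientation rule at each vertex, then reroot each attached component along a path of length $O(k)$ or $O(t)$. You also correctly note that simply overwriting the parent of $b_j$ suffices only when $b_j$ is the root of $C_j$.
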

\begin{proof}
The selector evaluates a constant-size case partition. Each case uses a bounded number of arithmetic operations, sign tests, sector-index computations, and modular adjacency checks. The number of returned repair edges is bounded by four in the Gaussian case and three in the EJ case. These bounds do not depend on $N_G$ or $N_E$, so repair-plan selection is $O(1)$.

If the full repaired parent map is requested, at least one output record is required for each healthy non-source vertex. Therefore the output size is $\Theta(N)$, and materialization is necessarily $\Theta(N)$.
\end{proof}

\section{Validation}

The validation is strict for the final selected candidate: for each enumerated fault set, the program records one selected orientation and one selected repair-edge list, and the structural checks are applied only to that selected candidate. The Gaussian audit uses the closed-form lists in Table~\ref{tab:gselector}. The EJ audit uses the algebraic operators in \eqref{eq:ejraytail}--\eqref{eq:ejraypair} and the case partition in Table~\ref{tab:ejselector}; the recorded candidate is then checked with no fallback to an unselected orientation. The validator constructs the full fault-pruned tree, adds only the selected repair edges, and checks connectivity, acyclicity, exact $c-1$ useful repair count, fault exclusion, and final depth.

\subsection{Audit Methodology}

The validation program records one row per fault set. The key fields are the selected case, selected orientation, selected repair-edge list, number of fault-pruned components $c$, number of useful external repair edges, final depth, and Boolean checks for connectivity, acyclicity, exact $c-1$ repair count, and depth bound. A row is accepted only if all structural checks hold simultaneously. The Gaussian validation uses Algorithm~\ref{alg:general} with Table~\ref{tab:gselector}; the EJ validation uses Algorithm~\ref{alg:general} with Table~\ref{tab:ejselector} and the explicit operators in Subsection~\ref{subsec:ej-operators}. The code never accepts a row merely because another unselected orientation happens to work.

\subsection{Gaussian Validation}

Table~\ref{tab:gvalid} summarizes the Gaussian exhaustive run. It covers $k=5,\ldots,12$, with $146{,}156$ total one- and two-fault cases and zero failures.

\begin{table}[H]
\centering
\caption{Gaussian strict validation summary.}
\label{tab:gvalid}
\begin{tabular}{rrrrrrr}
\toprule
$k$ & $N$ & $|F|$ & Cases & Fail & Max $c$ & Max depth\\
\midrule
5 & 61 & 1 & 60 & 0 & 2 & 7\\
5 & 61 & 2 & 1,770 & 0 & 5 & 7\\
6 & 85 & 1 & 84 & 0 & 2 & 8\\
6 & 85 & 2 & 3,486 & 0 & 5 & 8\\
7 & 113 & 1 & 112 & 0 & 2 & 9\\
7 & 113 & 2 & 6,216 & 0 & 5 & 9\\
8 & 145 & 1 & 144 & 0 & 2 & 10\\
8 & 145 & 2 & 10,296 & 0 & 5 & 10\\
9 & 181 & 1 & 180 & 0 & 2 & 11\\
9 & 181 & 2 & 16,110 & 0 & 5 & 11\\
10 & 221 & 1 & 220 & 0 & 2 & 12\\
10 & 221 & 2 & 24,090 & 0 & 5 & 12\\
11 & 265 & 1 & 264 & 0 & 2 & 13\\
11 & 265 & 2 & 34,716 & 0 & 5 & 13\\
12 & 313 & 1 & 312 & 0 & 2 & 14\\
12 & 313 & 2 & 48,516 & 0 & 5 & 14\\
\bottomrule
\end{tabular}
\end{table}

Table~\ref{tab:gcasevalid} gives the aggregate Gaussian case distribution. The O6 case is the only Gaussian family reaching four repair edges.

\begin{table}[H]
\centering
\caption{Aggregate Gaussian case statistics for $k=5,\ldots,12$.}
\label{tab:gcasevalid}
\begin{tabular}{lrrr}
\toprule
Case & Cases & Max $c$ & Max repair\\
\midrule
one axis & 272 & 2 & 1\\
one off-axis & 1,104 & 2 & 1\\
axis same ray & 1,104 & 3 & 2\\
axis opposite ray & 1,240 & 3 & 2\\
mixed axis/off-axis & 43,912 & 3 & 2\\
orthogonal-axis O6 & 2,480 & 5 & 4\\
two off-axis & 96,044 & 3 & 2\\
\bottomrule
\end{tabular}
\end{table}

\subsection{EJ Validation}

Table~\ref{tab:ejvalid} summarizes the EJ exhaustive run. It covers $t=2,\ldots,8$, with $52{,}395$ total cases and zero failures. The maximum useful repair count is one for one fault and three for two faults.

\begin{table}[H]
\centering
\caption{EJ strict certificate validation summary.}
\label{tab:ejvalid}
\begin{tabular}{rrrrrrr}
\toprule
$t$ & $N$ & $|F|$ & Cases & Fail & Max $c$ & Max depth\\
\midrule
2 & 19 & 1 & 18 & 0 & 2 & 3\\
2 & 19 & 2 & 153 & 0 & 4 & 3\\
3 & 37 & 1 & 36 & 0 & 2 & 4\\
3 & 37 & 2 & 630 & 0 & 4 & 5\\
4 & 61 & 1 & 60 & 0 & 2 & 5\\
4 & 61 & 2 & 1,770 & 0 & 4 & 5\\
5 & 91 & 1 & 90 & 0 & 2 & 6\\
5 & 91 & 2 & 4,005 & 0 & 4 & 6\\
6 & 127 & 1 & 126 & 0 & 2 & 7\\
6 & 127 & 2 & 7,875 & 0 & 4 & 7\\
7 & 169 & 1 & 168 & 0 & 2 & 8\\
7 & 169 & 2 & 14,028 & 0 & 4 & 8\\
8 & 217 & 1 & 216 & 0 & 2 & 9\\
8 & 217 & 2 & 23,220 & 0 & 4 & 9\\
\bottomrule
\end{tabular}
\end{table}

\begin{table}[H]
\centering
\caption{Aggregate EJ case statistics for $t=2,\ldots,8$.}
\label{tab:ejcasevalid}
\begin{tabular}{lrrr}
\toprule
Case & Cases & Max $c$ & Max repair\\
\midrule
one ray fault & 210 & 2 & 1\\
one sector fault & 504 & 2 & 1\\
two same ray & 504 & 3 & 2\\
two opposite ray & 609 & 3 & 2\\
two other ray pair & 2,436 & 4 & 3\\
two ray/sector mixed & 19,656 & 3 & 2\\
two sector adjacent/same & 14,112 & 3 & 2\\
two sector non-adjacent & 14,364 & 3 & 2\\
\bottomrule
\end{tabular}
\end{table}

\subsection{Interpretation of the Validation}

The validation is not a substitute for the mathematical case lemmas. Its role is to audit the case partition, boundary wraparound arithmetic, component count, and implementation of the selected certificate edges. The ranges were chosen to match and exceed the exhaustive regimes used during proof mining while keeping the audit fully reproducible: the Gaussian run extends through $k=12$, and the EJ run covers the complete $t=2,\ldots,8$ range used for EJ proof checking. Larger random runs would add engineering confidence but would not change the proof, because every selector row is parameterized symbolically in $k$ or $t$. We chose $k=5,\ldots,12$ and $t=2,\ldots,8$ because these ranges are large enough to exercise every selector family many times, including the boundary and wraparound subcases, while still allowing exhaustive enumeration of all one- and two-fault placements. In the Gaussian run, the maximum depth always equals the theorem bound $k+2$, and the maximum repair count is four only in the O6 family. In the EJ run, two-fault cases sometimes have $c=3$ and sometimes $c=4$; consequently the exact repair count is sometimes two and sometimes three. This variation is expected because EJ sector/ray geometry has several distinct two-fault fragmentation patterns.

\subsection{Reproducibility}

The validation suite consists of two self-contained Python scripts,
one for the Gaussian selector and one for the EJ selector, together
with a shared geometry module implementing the coordinate maps
\eqref{eq:phiG}, \eqref{eq:phiE}, the quotient-neighbor tests
$\Gamma_G$ and $\Gamma_E$, and the tree-building and BFS routines
used to compute fault-pruned components. Each script enumerates all
source-free one- and two-fault placements for the parameter range stated
in Tables~\ref{tab:gvalid}--\ref{tab:ejcasevalid}, runs the strict
selector on each placement, constructs the full repaired tree, and writes
one CSV row per fault set. The CSV fields are: network family, parameter
$k$ or $t$, fault set, selected case label, selected orientation, ordered
repair-edge list, component count $c$, number of useful repair edges,
final tree depth, and Boolean pass/fail flags for connectivity, acyclicity,
exact $c-1$ repair count, fault exclusion, and depth bound. No row in
either run has any flag set to false.

The scripts, CSV outputs, and a \texttt{README} describing how to reproduce
each table entry will be deposited in a permanent public repository upon
acceptance. Researchers wishing to examine the materials before acceptance
may request them directly from the corresponding author.

\section{Discussion}

The Gaussian and EJ selectors have the same component-repair principle but different geometry. Gaussian hard cases arise from degree-four orthogonal-axis cuts, where the O6 certificate can create five fault-pruned components. EJ hard cases are governed by hexagonal sector boundaries; in the tested range, two faults produce at most four components under the selected constant-time certificate orientation.

The results should not be interpreted as a universal multi-fault solution. For arbitrary $q$, the component lower bound still gives the exact target $c-1$ for any selected orientation with a connected component graph, and degree considerations imply $c-1=O(q)$. What breaks at $q=3$ is not the component argument but the blocking classification. A third fault can simultaneously block the side entry of an off-axis Gaussian suffix and the transverse entry of an axis tail, or in EJ geometry it can occupy the shared boundary endpoint after two adjacent-sector cuts. The two-fault selector works because every blocked entry has a unique alternate side or a single previous attachment that makes the shared boundary repaired. With three faults, two alternates can be blocked at the same time, so the certificate library must include interacting caps rather than independent suffixes. The present paper therefore establishes the complete constant-time certificate selection for $|F|\le2$ and leaves the $q\ge3$ interacting-cap classification as a separate problem.

A second limitation is that the selector computes the repair plan, not the full parent map. This is the appropriate complexity measure for local reconfiguration: the source or controller needs only the changed component-crossing edges and the orientation identifier. If every node's parent must be explicitly listed, the output size is linear.

\section{Conclusion}

This paper developed constant-time certificate selectors for local non-redundant broadcast repair in dense Gaussian and dense Eisenstein--Jacobi networks. For Gaussian networks, every one- or two-fault placement admits a selected repair of depth at most $k+2$ with exact external repair count $c-1$. For EJ networks, one-fault placements admit depth at most $t+1$, and two-fault placements admit depth at most $t+2$, again with exact $c-1$ external repair count. The selectors use constant-size coordinate case tables and return only the orientation and repair edges; full tree materialization remains linear in the network order.

Strict exhaustive validation supports the constructive proofs: $146{,}156$ Gaussian cases for $k=5,\ldots,12$ and $52{,}395$ EJ cases for $t=2,\ldots,8$ all satisfy connectivity, acyclicity, exact repair count, fault exclusion, and the stated depth bounds. The next theoretical direction is to extend the certificate library beyond two faults while preserving constant or fault-linear repair-plan complexity.

\section*{Acknowledgment}
The author thanks the Department of Computer Science, Faculty of Science, Kuwait University, for its support and research environment. This work did not receive a specific grant from any funding agency in the public, commercial, or not-for-profit sectors.

\end{document}